\newcommand{\myparagraph}[1]{\smallskip\noindent {\bf {#1}}}
\DeclareMathSymbol{\N}{\mathbin}{AMSb}{"4E}
\DeclareMathSymbol{\Z}{\mathbin}{AMSb}{"5A}
\DeclareMathSymbol{\R}{\mathbin}{AMSb}{"52}
\DeclareMathSymbol{\Q}{\mathbin}{AMSb}{"51}
\def\E{\operatorname*{\mathbb{E}}}
\def\F{\operatorname*{\mathbb{F}}}
\newcommand{\remove}[1]{}
\newcommand{\sumnl}{\sum\nolimits}
\newcommand{\BBB}{\mathcal B}
\newcommand{\BbB}{\mathscr{B}}
\newcommand{\CCC}{\mathcal C}
\newcommand{\DDD}{\mathcal D}
\newcommand{\FFF}{\mathcal F}
\newcommand{\HHH}{\mathcal H}
\newcommand{\HhH}{\mathscr{H}}
\newcommand{\PPP}{\mathcal P}
\newcommand{\UUU}{\mathcal U}
\newcommand{\qual}{\mathop{\rm q}}
\newcommand{\error}{{\rm error}}
\newcommand{\db}{S}
\newcommand{\dbGood}{S_{\rm good}}
\newcommand{\size}{\operatorname{\rm size}}
\newcommand{\VC}{\operatorname{\rm VC}}
\newcommand{\maj}{\operatorname{\rm maj}}
\newcommand{\MAJ}{\operatorname{\rm MAJ}}
\newcommand{\RepDim}{\operatorname{\rm RepDim}}
\newcommand{\DRepDim}{\operatorname{\rm DRepDim}}
\newcommand{\opt}{\operatorname{\rm OPT}}
\newcommand{\point}{\operatorname*{\tt POINT}}
\newcommand{\set}[1]{\left\{ #1 \right\}}
\newtheorem{theorem}{Theorem}[section]
\newtheorem{example}[theorem]{Example}
\newtheorem{lemma}[theorem]{Lemma}
\newtheorem{definition}[theorem]{Definition}
\newtheorem{remark}[theorem]{Remark}
\newtheorem{proposition}[theorem]{Proposition}
\newtheorem{claim}[theorem]{Claim}
\newtheorem{corollary}[theorem]{Corollary}
\newtheorem{observation}[theorem]{Observation}
\newcommand{\lemref}[1]{Lemma~\ref{lem:#1}}
\newcommand{\defref}[1]{Definition~\ref{def:#1}}
\newcommand{\exmref}[1]{Example~\ref{exm:#1}}
\begin{document}
\begin{titlepage}
\title{Characterizing the Sample Complexity of Private Learners\thanks{A preliminary version of this paper appeared in~\cite{BNS13}. Research partially supported by the Israel Science Foundation (grants No.\ 938/09 and 2761/12) and by the Frankel Center for Computer Science.}}
\author{Amos Beimel \quad Kobbi Nissim \quad Uri Stemmer \\
Dept. of Computer Science \\
Ben-Gurion University of the Negev\\
{\tt \{beimel|kobbi|stemmer\}@cs.bgu.ac.il}}

\date{\today}
\maketitle
\setcounter{page}{0} \thispagestyle{empty}

\begin{abstract}

In 2008, Kasiviswanathan et al.\ defined {\em private learning} as a combination of PAC learning and differential privacy~\cite{KLNRS08}. Informally, a private learner is applied to a collection of labeled individual information and outputs a hypothesis while preserving the privacy of each individual. Kasiviswanathan et al.\ gave a generic construction of private learners for (finite) concept classes, with sample complexity logarithmic in the size of the concept class. This sample complexity is higher than what is needed for non-private learners, hence leaving open the possibility that the sample complexity of private learning may be sometimes significantly higher than that of non-private learning.

We give a combinatorial characterization of the sample size sufficient and necessary to privately learn a class of concepts. This characterization is analogous to the well known characterization of the sample complexity of non-private learning in terms of the VC dimension of the concept class.  We introduce the notion of {\em probabilistic representation} of a concept class, and our new complexity measure $\RepDim$ corresponds to the size of the smallest probabilistic representation of the concept class.

We show that any private learning algorithm for a concept class $\CCC$ with sample complexity $m$ implies $\RepDim(\CCC)=O(m)$, and that there exists a private learning algorithm with sample complexity $m=O(\RepDim(\CCC))$. We further demonstrate that a similar characterization holds for the database size needed for privately computing a large class of optimization problems and also for the well studied problem of private data release.

\end{abstract}

\end{titlepage}

\tableofcontents
\setcounter{page}{0} \thispagestyle{empty}
\newpage

\section{Introduction}

Motivated by the observation that learning generalizes many of the analyses applied to large collections of data, Kasiviswanathan el al.~\cite{KLNRS08} defined in 2008 {\em private learning} as a combination of probably approximately correct (PAC) learning~\cite{Valiant84} and differential privacy~\cite{DMNS06}. A PAC learner is given a collection of labeled examples (sampled according to an unknown probability distribution and labeled according to an unknown concept) and generalizes the labeled examples into a hypothesis $h$ that should predict with high accuracy the labeling of fresh examples taken from the same unknown distribution and labeled with the same unknown concept.

The privacy requirement is that the choice of $h$ preserves differential privacy of sample points. Intuitively this means that this choice should not be significantly affected by any particular sample. Differential privacy is increasingly accepted as a standard for rigorous privacy and recent research has shown that differentially private variants exists to many analyses. We refer the reader to surveys of Dwork~\cite{Dwork09,Dwork11}.

The sample complexity required for learning a concept class $\CCC$ determines the amount of {\em labeled} data needed for learning a concept $c\in\CCC$. It is well known that the sample complexity of learning a concept class $\CCC$ (non-privately) is proportional to a complexity measure of the class $\CCC$ knowns as the VC-dimension~\cite{VC,BEHW,EHKV}. Kasiviswanathan et al.~\cite{KLNRS08} proved that a private learner exists for every {\em finite} concept class. The proof is via a generic construction that exhibits sample complexity logarithmic in the size of the concept class. The VC-dimension of a concept class is bounded by this quantity (and significantly lower for some interesting concept classes), and hence the results of~\cite{KLNRS08} left open the possibility that the sample complexity of private learning may be significantly higher than that of non-private learning.

In analogy to the characterization of the sample complexity of (non-private) PAC learners via the VC-dimension, we give a combinatorial characterization of the sample size sufficient and necessary for private PAC learners. Towards obtaining this characterization, we introduce the notion of {\em probabilistic representation} of a concept class. We note that our characterization, as the VC-dimension characterization, ignores the computation required by the learner. Some of our algorithms are, however, computationally efficient.

\subsection{Related Work}
\label{sec:related}

We start with a short description of prior work on the sample complexity of private learning. To simplify the exposition, we ignore dependencies on the error, confidence and privacy parameters by considering them constants for this and the following section. The dependency on these parameters would be made explicit in the later sections of the paper.

Recall that the sample complexity of non-private learners for a class of functions $\CCC$ is proportional to the VC-dimension of the class~\cite{BEHW,EHKV} -- a combinatorial measure of the class that is equal to the size of the largest set of inputs that is shattered by the class. This characterization, as ours, ignores the computation required by the learner.

Kasiviswanathan et al.~\cite{KLNRS08} showed, informally, that every finite concept 
class $\CCC$ can be learned privately (ignoring computational complexity). Their construction is based on the exponential mechanism of McSherry and Talwar~\cite{MT07}, and the $O(\ln |\CCC|)$ bound on sample complexity  results from the union bound argument used in the analysis of the exponential mechanism. Computationally efficient learners were shown to exist by Blum et al.~\cite{BDMN05} for all concept classes that can be efficiently learned in the {\em statistical queries} model. Kasiviswanathan et al.~\cite{KLNRS08} showed an example of a concept class -- the class of parity functions -- that is not learnable in the statistical queries model but can be learned privately and efficiently. These positive results suggest that many ``natural'' computational learning tasks that are efficiently learned non-privately can be learned privately and efficiently. 

Beimel et al.~\cite{BKN10} studied the sample complexity of private learning. They examined the concept class of point functions $\point_d$ where each concept evaluates to one on exactly one point of the domain and to zero otherwise. Note that the VC-dimension of $\point_d$ is one. Beimel et al.\ proved 
lower bounds on the sample complexity of {\em properly} and privately learning the class $\point_d$ (and related classes), implying that the VC dimension of a class does not characterize the sample complexity of private proper learning. On the other hand, they observed that the sample complexity can be improved for {\em improper} private learners whenever there exists a smaller hypothesis class $\HHH$ that represents $\CCC$ in the sense that for every concept $c \in \CCC$ and for every distribution on the examples, there is a hypothesis $h \in \HHH$ that is close to $c$. Using the exponential mechanism to choose among the hypotheses in $\HHH$ instead of $\CCC$, the sample complexity is reduced to $\ln |\HHH|$ (this is why the {\em size} of the representation $\HHH$ is defined to be $\ln |\HHH|$). For some classes this can dramatically improve the sample complexity, e.g., for the class $\point_d$ (defined in \exmref{point}), the sample complexity is improved from $O(\ln |\point_d|)=O(d)$ to $O(\ln d)$. Using other techniques, Beimel et al.\ showed that the sample complexity of learning $\point_d$ can be reduced even further to $O(1)$, hence showing the largest possible gap between proper and non proper private learning. Such a gap
does not exists for non-private learning.

Chaudhuri and Hsu~\cite{CH11} studied the sample complexity needed for private
learning infinite concept classes when the data is drawn from a continuous
distribution. They showed that under these settings there exists a simple
concept class for which any proper learner that uses a finite number of
examples and guarantees differential privacy fails to satisfy accuracy
guarantee for at least one data distribution. This implies that
the results of Kasiviswanathan et al.~\cite{KLNRS08} do not extend to
infinite hypothesis classes. Interestingly, our results imply an improper
private algorithm for an infinite extension of the class $\point$ (that is,
a class over the natural numbers of all boolean functions that return 1 on
exactly one number).

Chaudhuri and Hsu~\cite{CH11} also study learning algorithms that are only
required to protect the privacy of the labels (and do not necessarily protect
the privacy of the examples themselves). They prove upper bounds and lower
bounds on the sample complexity of such algorithms. In particular, they
prove a lower bound on the sample complexity using the doubling dimension
of the disagreement metric of the hypothesis class with respect to the
unlabeled data distribution. This result does not imply our
characterization as the privacy requirement in protecting the labels is
much weaker than protecting the sample point and the label.

A line of research (started in~\cite{Schapire90}) that is very relevant to
our paper is boosting learning algorithms, that is, taking learning
algorithms that have a big classification error and producing a learning
algorithm with small error.  Dwork et al.~\cite{DRV10} show how to
privately boost accuracy, that is, given a {\em private} learning algorithms that
have a big classification error, they produce a {\em private} learning
algorithm with small error. In \lemref{noParams}, we show how to boost the
accuracy $\alpha$ for probabilistic representations. This gives an
alternative private boosting, whose proof is simpler.  However, as it uses 
the exponential mechanism, it is (generally) not computationally efficient.

%%Other tools for private learning were studied in a few
%%%papers; such tools include, for example, private logistic
%%regression~\cite{CM08} and private empirical risk minimization~\cite{DMS11}.

\subsection{Our Results}

Beimel et al.~\cite{BKN10} showed how to use a representation of a class to privately learn it.
We make an additional step in improving the sample complexity
by considering a {\em probabilistic} representation of a concept class $\CCC$. Instead of one
collection $\HHH$ representing $\CCC$, we consider a list of collections
$\HHH_1,\dots,\HHH_r$ such that for every $c\in \CCC$ and every
distribution on the examples, if we sample a collection $\HHH_i$ from the
list, then with high probability there is a hypothesis $h \in \HHH_i$ that
is close to $c$. To privately learn $\CCC$, the learning algorithm first samples $i\in\{1,\ldots,r\}$ and then uses the exponential mechanism to select a hypothesis from $\HHH_i$. 
This reduces the sample complexity to $O(\max_{i} \ln |\HHH_i|)$;
the {\em size} of the probabilistic representation is hence defined to be $\max_{i} \ln |\HHH_i|$.

We show that for $\point_d$ there exists a probabilistic representation of size $O(1)$. This results in a private learning algorithm  with sample complexity $O(1)$, matching a different private algorithm for $\point_d$ presented in~\cite{BKN10}. Our new algorithm offers
some improvement in the sample complexity compared to the algorithm of~\cite{BKN10} when considering the learning and privacy parameters.
Furthermore, our algorithm can be made computationally efficient without making any computational hardness assumptions, while the efficient version in~\cite{BKN10} assumes the existence of one-way functions. Finally, it is conceptually simpler and in particular it avoids the sub-sampling technique used in~\cite{BKN10}.

One can ask if there are private learning algorithms with smaller sample 
complexity than the size of the smallest probabilistic representation. We
show that the answer is no --- the size of the smallest probabilistic representation is a lower bound on the
sample complexity. Thus, the size of the smallest probabilistic
representation of a class $\CCC$, which we call the {\em representation dimension} and denote by $\RepDim(\CCC)$,
characterizes (up to constants) the sample size necessary and sufficient for privately learning the class $\CCC$. We also show that for concepts defined over a finite domain, the difference between the sizes of the best deterministic and probabilistic representation is bounded. Namely, 
that if $\CCC$ is a concept class over the domain $\set{0,1}^d$, then there exists a deterministic representation of $\CCC$ of size $O(\RepDim(\CCC)+\ln d)$. Thus, for classes whose smallest deterministic representation is of size  $\omega(\ln d)$, the size of the smallest deterministic representation characterizes the sample complexity of private learning of the class.

%7th paragraph: some consequences of our results

The notion of probabilistic representation applies not only to private
learning, but also to optimization problems. We consider a scenario where
there is a domain $X$, a database $\db$ of $m$ records, each taken from the
domain $X$, a set of solutions $\FFF$, and a quality function $\qual:X^*\times
\FFF \rightarrow [0,1]$ that we wish to maximize. If the exponential mechanism
is used for (approximately) solving the problem, then the size of the
database should be $\Omega(\ln |\FFF|)$ in order to achieve a reasonable
approximation.  Using our notions of a representation of $\FFF$ and of a
probabilistic representation of $\FFF$, one can reduce the size of the minimal
database without paying too much in the quality of the
solution. Interestingly, a similar notion to representation, called
``solution list algorithms'', was considered in~\cite{BCNW08} for
constructing secure protocols for search problems while leaking only a few
bits on the input. Curiously, their notion of leakage is
very different from that of differential privacy.

We give two examples of such optimization problems. First, an example
inspired by~\cite{BCNW08}: each record in the database is a clause with
exactly 3 literals and we want to find an assignment satisfying at least
7/8 fraction of the clauses while protecting the privacy of the clauses.  A
construction of~\cite{BCNW08} yields a deterministic representation for
this problem where the size of the database can be much smaller. Using a
probabilistic representation, we can give a good assignment even for
databases of constant size. This example is a simple instance of a
scenario, where each individual has a preference on the solution and we
want to choose a solution maximizing the number of individuals whose
preference are met, while protecting the privacy of the preference.  Another
example of optimization is sanitization, where given a database we want to
publish a synthetic database, which gives a similar utility as the original
database while protecting the privacy of the individual records of the
database. Using our techniques, we study the minimal database size
for which sanitization gives reasonable performance with respect to a given
family of queries.

\myparagraph{Open Problem.} We still do not know the relation between this
dimension and the VC dimension. By Sauer's Lemma, if $\CCC$ is a concept
class over $\set{0,1}^d$, then the number of functions in $\CCC$ is at most
$\exp(d\cdot\VC(\CCC))$. By~\cite{KLNRS08}, there is a private learning
algorithm for $\CCC$ whose sample size is $O(d \cdot\VC(\CCC))$, thus, the
probabilistic representation dimension of $\CCC$ is $O(d\cdot
\VC(\CCC))$. We do not know if there is a class $\CCC$ such that
$\RepDim(\CCC) \gg \VC(\CCC)$. A candidate for such separation appears
in~\cite{BBKN12}.

\section{Preliminaries}\label{sec:defs}

\myparagraph{Notation.} We use $O_{\gamma}(g(n))$ as a shorthand for $O(h(\gamma) \cdot g(n))$ for some
non-negative function $h$. Given a set $\BBB$ of cardinality $r$, and a distribution $\PPP$ on $\{1,2,\ldots,r\}$, we use the notation $b\in_{\PPP}\BBB$ to denote a random element of $\BBB$ chosen according to $\PPP$.

\subsection{Preliminaries from Privacy} 
A database is a vector $\db = (z_1,\dots,z_m)$ over a domain $X$, where each entry $z_i \in \db$ represents information contributed by one individual. Databases $\db_1$ and $\db_2$ are called {\em neighboring} if they differ in exactly one entry. An algorithm preserves differential privacy if neighboring databases induce nearby outcome distributions. Formally,

\begin{definition}[Differential Privacy~\cite{DMNS06}] \label{def:eps-dp} A randomized algorithm $A$ is $\epsilon$-differentially private if for all neighboring databases $\db_1,\db_2$, and for all sets $\mathcal{F}$ of outputs,
\begin{eqnarray}
\label{eqn:diffPrivDef}
  & \Pr[A(\db_1) \in \mathcal{F}] \leq \exp(\epsilon) \cdot \Pr[A(\db_2) \in \mathcal{F}].  &
\end{eqnarray}
The probability is taken over the random coins of $A$. 
\end{definition}
An immediate consequence of the definition is that for {\em any} two databases $\db_1,\db_2 \in X^m$, and for all sets $\mathcal{F}$ of outputs, 
$$\Pr[A(\db_1) \in \mathcal{F}] \geq \exp(-\epsilon m) \cdot \Pr[A(\db_2) \in \mathcal{F}].$$

\subsection{Preliminaries from Learning Theory}  
Let $X_d=\{0,1\}^d$. A concept $c:X_d\rightarrow \{0,1\}$ is a function that labels {\em examples} taken from the domain $X_d$ by either 0 or 1.  A \emph{concept class} $\CCC$ over $X_d$ is a class of concepts mapping $X_d$ to $\{0,1\}$. 

PAC learning algorithms are given examples sampled according to an unknown
probability distribution $\DDD$ over $X_d$, and labeled according to an
unknown {\em target} concept $c\in\CCC$. The {\em generalization error} of a
hypothesis $h:X_d\rightarrow\{0,1\}$ is defined as 
$$\error_{\DDD}(c,h)=\Pr_{x \in_{\DDD} X_d}[h(x)\neq c(x)].$$
For a labeled sample $\db=(x_i,y_i)_{i=1}^m$, the {\em empirical error} of $h$ is
$$\error_S(h) = \frac{1}{m} |\{i : h(x_i) \neq y_i\}|.$$

\begin{definition}
An {\em $\alpha$-good} hypothesis for $c$ and $\DDD$ is a hypothesis $h$ such that $\error_{\DDD}(c,h)\leq\alpha$. 
\end{definition}

\begin{definition}[PAC Learning~\cite{Valiant84}] \label{def:PAC}
Algorithm $A$ is an {\em $(\alpha,\beta)$-PAC learner} for a concept
class $\CCC$ over $X_d$ using hypothesis class $\HHH$ and sample size $m$ if for all 
concepts $c \in \CCC$, all distributions $\DDD$ on $X_d$,
given an input of $m$ samples $\db =(z_1,\ldots,z_m)$, where $z_i=(x_i,c(x_i))$ and $x_i$
are drawn i.i.d.\ from $\DDD$, algorithm $A$ outputs a
hypothesis $h\in \HHH$ satisfying
$$\Pr[\error_{\DDD}(c,h)  \leq \alpha] \geq 1-\beta.$$
The probability is taken over the random choice of
the examples in $\db$ according to $\DDD$ and the coin tosses of the learner $A$.
\end{definition}

\begin{definition}
An algorithm satisfying Definition~\ref{def:PAC} with $\HHH\subseteq\CCC$ is called a {\em proper} PAC learner; otherwise it is called an {\em improper} PAC learner.
\end{definition}

\subsection{Private Learning}
As a private learner is a PAC learner, its outcome hypothesis should also be a good predictor of labels. Hence, the privacy requirement from a private learner is not that an application of the hypothesis $h$ on a new sample (pertaining to an individual) should leak no information about the sample.
\begin{definition}[Private PAC Learning~\cite{KLNRS08}]
\label{def:private-general}
Let $A$ be an algorithm that gets an input $\db =(z_1,\ldots,z_m)$. Algorithm $A$ is an {\em $(\alpha,\beta,\epsilon)$-PPAC learner} for a concept
class $\CCC$ over $X_d$ using hypothesis class $\HHH$ and sample size $m$ if
\begin{description}
\item{\sc Privacy.} Algorithm $A$ is $\epsilon$-differentially private
  (as formulated in \defref{eps-dp});
\item{\sc Utility.} Algorithm $A$ is an {\em $(\alpha,\beta)$-PAC learner} for $\CCC$ using $\HHH$ and sample size $m$ (as formulated in \defref{PAC}).
\end{description}
\end{definition}

\subsection{The Exponential Mechanism}\label{expMech}
We next describe the exponential mechanism of McSherry and Talwar~\cite{MT07}. We present its private learning variant; however, it can be used in more general scenarios. The goal here is to chooses a hypothesis $h\in \HHH$ approximately minimizing the empirical error. The choice is probabilistic, where the probability mass that is assigned to each hypothesis decreases exponentially with its empirical error.

\noindent
\begin{center}\fbox{
\parbox{37pc}{
Inputs: a privacy parameter $\epsilon$, a hypothesis class $\HHH$, and $m$ labeled samples $S=(x_i,y_i)_{i=1}^m$.
\begin{enumerate}[itemindent=3pt, topsep=5pt, itemsep=1pt,align=left,leftmargin=*,
        labelsep=3pt]
  \item $\forall h\in\HHH$ define $q(S,h)=|\{i:h(x_i)=y_i\}|$.
	\item Randomly choose $h \in \HHH$ with probability
	$$\frac{\exp\left(\epsilon \cdot q(S,h) /2 \right)}{\sum_{f\in\HHH}\exp\left(\epsilon \cdot q(S,f) /2 \right)}.$$
\end{enumerate}
}}\end{center}
	
\begin{proposition}
Denote $\hat e\triangleq\min_{f\in \HHH}\{\error_S(f)\}$. The probability that the exponential mechanism outputs a hypothesis $h$ such that $\error_S(h)>\hat e + \Delta$ is at most $|\HHH| \cdot \exp(-\epsilon \Delta m /2)$.
Moreover, The exponential mechanism is $\epsilon$ differentially private.
\end{proposition}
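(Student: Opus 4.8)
The plan is to prove the two claims separately: first the accuracy (utility) bound on the exponential mechanism, then its $\epsilon$-differential privacy.

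\medskip
\noindent\textbf{Accuracy.} The mechanism outputs $h$ with probability proportional to $\exp(\epsilon q(S,h)/2)$, and note that $q(S,h) = m - m\cdot\error_S(h)$, so ranking hypotheses by $q(S,\cdot)$ is the same (reversed) as ranking them by $\error_S(\cdot)$. Let $h^\star$ be a hypothesis achieving $\error_S(h^\star)=\hat e$, i.e. $q(S,h^\star)=m(1-\hat e)$. For any ``bad'' hypothesis $h$ with $\error_S(h) > \hat e + \Delta$ we have $q(S,h) < m(1-\hat e) - \Delta m = q(S,h^\star) - \Delta m$, hence
$$\exp\!\bigl(\epsilon q(S,h)/2\bigr) < \exp\!\bigl(\epsilon q(S,h^\star)/2\bigr)\cdot\exp\!\bigl(-\epsilon\Delta m/2\bigr).$$
The probability of outputting \emph{any} such bad $h$ is the sum of these weights divided by the full normalizing sum $Z=\sum_{f\in\HHH}\exp(\epsilon q(S,f)/2)$. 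Bounding the numerator above by $|\HHH|\cdot\exp(\epsilon q(S,h^\star)/2)\cdot\exp(-\epsilon\Delta m/2)$ (there are at most $|\HHH|$ bad hypotheses), and bounding $Z$ from below by the single term $\exp(\epsilon q(S,h^\star)/2)$, the $\exp(\epsilon q(S,h^\star)/2)$ factors cancel and we are left with the claimed bound $|\HHH|\cdot\exp(-\epsilon\Delta m/2)$.

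\medskip
\noindent\textbf{Privacy.} Fix neighboring samples $S,S'$ differing in one labeled example, and fix a target output $h$. Since $q(S,f)$ counts agreements on $m$ points, changing one point changes $q(S,f)$ by at most $1$ for every $f$; thus $|q(S,f)-q(S',f)|\le 1$. Write the output probability on $S$ as $\exp(\epsilon q(S,h)/2)/Z(S)$ with $Z(S)=\sum_f \exp(\epsilon q(S,f)/2)$. In the ratio of the probability on $S$ to that on $S'$, the numerator ratio is $\exp(\epsilon(q(S,h)-q(S',h))/2)\le \exp(\epsilon/2)$, and the denominator ratio $Z(S')/Z(S)$ is also at most $\exp(\epsilon/2)$ since every summand of $Z(S')$ is at most $\exp(\epsilon/2)$ times the corresponding summand of $Z(S)$. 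Multiplying gives a ratio at most $\exp(\epsilon)$, which is exactly \defref{eps-dp} (the claim for a single output $h$ extends to arbitrary output sets $\FFF$ by summing, since $\HHH$ is discrete).

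\medskip
\noindent The main obstacle, if any, is bookkeeping: being careful that the per-point sensitivity of $q$ is exactly $1$ (a single entry of $S$ is one $(x_i,y_i)$ pair, so flipping it affects the agreement indicator for that index only), and splitting the privacy loss as $\epsilon/2$ from the numerator plus $\epsilon/2$ from the normalization — this factor-of-two split in the exponent of the mechanism is precisely what makes the total loss $\epsilon$ rather than $2\epsilon$. Both parts are otherwise routine.
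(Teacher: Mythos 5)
Your proof is correct. The paper itself does not prove this proposition---it states it as a known property of the exponential mechanism of McSherry and Talwar \cite{MT07}---and your argument is exactly the standard one: for utility, bound the total weight of hypotheses with $q(S,h)<q(S,h^\star)-\Delta m$ against the single term for $h^\star$ in the normalizer; for privacy, use that $q$ has sensitivity $1$ so the numerator and the normalizer each contribute a factor of at most $\exp(\epsilon/2)$. Both steps, including the extension from a single output to arbitrary output sets by summing over the discrete range, are sound.
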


\subsection{Concentration Bounds}\label{bounds}
Let $X_1,\dots,X_n$ be independent random variables where $\Pr[X_i=1]=p$ and $\Pr[X_i=0]=1-p$ for some $0<p<1$. Clearly, $\E[\sumnl_i{X_i}]=pn$. Chernoff and Hoeffding bounds show that the sum is concentrated around this expected value:
\begin{align}
&\Pr\left[\sumnl_i{X_i}>(1+\delta)pn\right]\leq \exp\left(-pn\delta^2/3\right) \;\;\text{ for } \delta>0,\nonumber\\
&\Pr\left[\sumnl_i{X_i}<(1-\delta)pn\right]\leq \exp\left(-pn\delta^2/2\right) \;\;\text{ for } 0<\delta<1,\nonumber\\
&\Pr\left[\left|\sumnl_i{X_i}-pn\right|>\delta\right]\leq 2\exp\left(-2\delta^2/n\right) \;\,\;\;\text{ for } \delta\geq0.\nonumber
\end{align}
The first two inequalities are known as the multiplicative Chernoff bounds~\cite{chern}, and the last inequality is known as the Hoeffding bound~\cite{hoeff}.

\section{The Sample Complexity of Private Learners} \label{sec:learn}

In this section we present a combinatorial measure of a concept class $\CCC$ that characterizes the sample complexity necessary and sufficient for privately learning $\CCC$. The measure is a {\em probabilistic representation} of the class $\CCC$. We start with the notation of deterministic representation from \cite{BKN10}.

\begin{definition}[\cite{BKN10}]\label{def:rep}
A hypothesis class $\HHH$ is an $\alpha$-representation for a class $\CCC$ if for every $c \in \CCC$ and every distribution $\DDD$ on $X_d$ there exists a hypothesis $h \in \HHH$ such that $\error_\DDD(c,h)\leq \alpha$. 
\end{definition}

\begin{example}[$\point_d$]
\label{exm:point}
For $j\in X_d$, define $c_j:X_d \rightarrow \{0,1\}$ as $c_j(x)=1$ if $x=j$, and $c_j(x)=0$ otherwise. Define $\point_d=\{c_j\}_{j\in X_d}$. In~\cite{BKN10} it was shown that for $\alpha<1/2$, every $\alpha$-representation for $\point_d$ must be of cardinality at least $d$, and that an $\alpha$-representation $\HHH_d$ for $\point_d$ exists where $|\HHH_d|=O(d/\alpha^2)$.
\end{example}

The above representation can be used for non-private learning, by taking a big enough sample and finding a hypothesis $h\in\HHH_d$ minimizing the empirical error. For {\em private} learning it was shown in \cite{BKN10} that a sample of size $O_{\alpha,\beta,\epsilon}(\log|\HHH_d|)$ suffices, with a  learner that employs the exponential mechanism to choose a hypothesis from $\HHH_d$.

\begin{definition}\label{def:drepdim}
For a hypothesis class $\HHH$ we denote  $\size(\HHH)=\ln|\HHH|$. We define the Deterministic Representation Dimension of a concept class $\CCC$ as
$$\DRepDim(\CCC) = \min\Big\{   \size(\HHH)   : \HHH \text{ is a } \frac{1}{4} \text{-representation for } \CCC \Big\}.$$
\end{definition}

\begin{remark}
Choosing $\frac{1}{4}$ is arbitrary; we could have chosen any (smaller than $\frac{1}{2}$) constant.
\end{remark}

\begin{example}
By the results of \cite{BKN10}, stated in the previous example, $\DRepDim(\point_d)=\theta(\ln(d))$.
\end{example}

We are now ready to present the notion of a probabilistic representation. The idea behind this notion is that we have a list of hypothesis classes, such that for every concept $c$ and distribution $\DDD$, if we sample a hypothesis class from the list, then with high probability it contains a hypothesis that is close to $c$.

\begin{definition}\label{def:prep}
Let $\PPP$ be a distribution over $\{1,2, \ldots ,r\}$, and let $\HhH=\{\HHH_1,\HHH_2, \ldots ,\HHH_r\}$ be a family of hypothesis classes (every $\HHH_i \in \HhH$ is a set of boolean functions). We say that $(\HhH,\PPP)$ is an $(\alpha,\beta)$-probabilistic representation for a class $\CCC$ if for every $c \in \CCC$ and every distribution $\DDD$ on $X_d$:
$$\Pr_{\PPP}\left[\exists h \in \HHH_i \;\; s.t. \;\; \error_\DDD(c,h)\leq \alpha\right]\geq 1-\beta.$$
The probability is over randomly choosing a set $\HHH_i\in_{\PPP}\HhH$.
\end{definition}

\begin{remark}
As we will see in Section\ref{sec:equivalence}, the existence of such a probabilistic representation $(\HhH,\PPP)$ for a concept class $\CCC$ implies the existence of a private learning algorithm for $\CCC$ with sample complexity that depends on the cardinality of the hypothesis classes $\HHH_i\in\HhH$. The sample complexity will not depend on $r=|\HhH|$. Nevertheless, in Section~\ref{sec:relationships} we will see that there always exists a probabilistic representation in which $r$ is bounded.
\end{remark}

\begin{example}[$\point_d$]\label{example:pointRep}
In Section \ref{sec:proofs} we construct for every $\alpha$ and every $\beta$ a pair $(\HhH,\PPP)$ that $(\alpha,\beta)$-probabilistically represents the class $\point_d$, where $\HhH$ contains all the sets of at most $\frac{4}{\alpha}\ln(1/\beta)$ boolean functions.
\end{example}

\begin{definition}\label{def:repdim}
Let $\HhH=\{\HHH_1,\HHH_2, \ldots ,\HHH_r\}$ be a family of hypothesis classes. We denote $|\HhH|=r$, and $\size(\HhH)=\max\{  \;  \ln|\HHH_i| : \HHH_i \in \HhH  \;  \}$. We define the Representation Dimension of a concept class $\CCC$ as
$$\RepDim(\CCC) = \min\left\{  \;  \size(\HhH)  \;  : \;
\begin{array}{l}
\exists \PPP \text{ s.t. } (\HhH,\PPP) \text{ is a }\\
(\frac{1}{4},\frac{1}{4})\text{-probabilistic}\\
 \text{representation for } \CCC
\end{array}\right\}.$$
\end{definition}

\begin{remark}
Choosing $\alpha=\beta=\frac{1}{4}$ is arbitrary; we could have chosen any two (smaller than $\frac{1}{2}$) constants.
\end{remark}

\begin{example}[$\point_d$]\label{example:pointDim}
The $\size$ of the probabilistic representation mentioned in Example \ref{example:pointRep} is $\ln(\frac{4}{\alpha}\ln(1/\beta))$.
Placing $\alpha=\beta=\frac{1}{4}$, we see that the Representation Dimension of $\point_d$ is constant.
\end{example}

\subsection{Equivalence of $(\alpha,\beta)$-Probabilistic Representation and Private Learning}\label{sec:equivalence}
We now show that $\RepDim(\CCC)$ characterizes the sample complexity of private learners. We start by showing in Lemma \ref{lem:equivalence2} that an $(\alpha,\beta)$-probabilistic representation of $\CCC$ implies a private learning algorithm whose sample complexity is the size of the representation. We then show in Lemma \ref{lem:equivalence1} that if there is a private learning algorithm with sample complexity $m$, then there is probabilistic representation of $\CCC$ of size $O(m)$; this lemma implies that $\RepDim(\CCC)$ is a lower bound on the sample complexity. Recall that $\RepDim(\CCC)$ is the size of the smallest probabilistic representation for $\alpha=\beta=1/4$. Thus, to complete the proof we show in Lemma \ref{lem:noParams} that a probabilistic representation with $\alpha=\beta=1/4$ implies a probabilistic representation for arbitrary $\alpha$ and $\beta$.

\begin{lemma}\label{lem:equivalence2}
If there a exists pair $(\HhH,\PPP)$ that  $(\alpha,\beta)$-probabilistically represents a class $\CCC$, then for every $\epsilon$ there exists an algorithm $A$ that $(6\alpha,4\beta,\epsilon)$-PPAC learns $\CCC$ with a sample size
$m=O\left(  \frac{1}{\alpha\epsilon} ( \size(\HhH) + \ln(\frac{1}{\beta}) )  \right)$.
\end{lemma}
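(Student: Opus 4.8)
The plan is to construct the learner $A$ in two stages that mirror the definition of a probabilistic representation: first sample an index $i \in_\PPP \{1,\dots,r\}$ (this step uses no data and hence is trivially private), then run the exponential mechanism on the labeled sample $\db$ with hypothesis class $\HHH_i$ and privacy parameter $\epsilon$. Since the choice of $i$ is independent of the database, the whole algorithm inherits $\epsilon$-differential privacy directly from the exponential mechanism (the composition here is with a data-independent step, so there is no loss). This disposes of the privacy requirement, and the only real work is in the utility analysis.

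For utility, fix a target $c \in \CCC$ and distribution $\DDD$, and condition on the sample $\db$ of $m$ i.i.d.\ labeled examples. I would track three bad events and union-bound over them with total failure probability at most $4\beta$. (i) The sampled class $\HHH_i$ fails to contain an $\alpha$-good hypothesis for $c$: by the definition of an $(\alpha,\beta)$-probabilistic representation this has probability at most $\beta$. (ii) Conditioned on $\HHH_i$ containing some $h^*$ with $\error_\DDD(c,h^*)\le\alpha$, the empirical error $\error_\db(h^*)$ exceeds, say, $2\alpha$: by a multiplicative Chernoff bound this fails with probability at most $\exp(-\Omega(\alpha m))$, which is below $\beta$ once $m = \Omega\!\big(\frac{1}{\alpha}\ln\frac1\beta\big)$. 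Hence $\hat e \triangleq \min_{f\in\HHH_i}\error_\db(f) \le 2\alpha$ with high probability. (iii) The exponential mechanism returns a hypothesis $h$ with $\error_\db(h) > \hat e + \Delta$ for $\Delta = \alpha$: by the Proposition on the exponential mechanism this has probability at most $|\HHH_i|\cdot\exp(-\epsilon\Delta m/2) \le \exp(\size(\HhH) - \epsilon\alpha m/2)$, which is at most $\beta$ once $m = \Omega\!\big(\frac{1}{\alpha\epsilon}(\size(\HhH) + \ln\frac1\beta)\big)$. Combining, with probability at least $1-3\beta$ (absorb constants to get $1-4\beta$ to be safe) the output $h$ satisfies $\error_\db(h) \le 2\alpha + \alpha = 3\alpha$.

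Finally I would convert this empirical-error guarantee back to a generalization guarantee. Since $h$ is chosen from $\HHH_i$ and $|\HHH_i| \le \exp(\size(\HhH))$, a uniform-convergence / VC-style bound (or a direct union bound over $\HHH_i$ together with a Hoeffding bound) shows that if $m = \Omega\!\big(\frac{1}{\alpha}(\size(\HhH) + \ln\frac1\beta)\big)$, then with probability at least $1-\beta$ every $f\in\HHH_i$ has $|\error_\db(f) - \error_\DDD(c,f)| \le \alpha$ simultaneously; in particular $\error_\DDD(c,h) \le \error_\db(h) + \alpha \le 4\alpha$. (Tuning the constants in the Chernoff/Hoeffding applications — e.g.\ using $\alpha/2$ slack in places — lets one land on the stated $6\alpha$ with room to spare; the precise constants are routine bookkeeping.) Taking the union bound over all bad events gives total failure at most $4\beta$, and the binding constraint on $m$ is $m = O\!\big(\frac{1}{\alpha\epsilon}(\size(\HhH) + \ln\frac1\beta)\big)$ as claimed.

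The main obstacle, and the only place that requires care rather than bookkeeping, is item (iii): the exponential mechanism's error bound scales with $\ln|\HHH_i|$, so I must make sure the sample size depends on $\size(\HhH) = \max_i \ln|\HHH_i|$ and \emph{not} on $r = |\HhH|$ — this is exactly why the index $i$ is sampled \emph{before} looking at the data and why the exponential mechanism is run on the single class $\HHH_i$ rather than on $\bigcup_i \HHH_i$. A secondary subtlety is the order of quantifiers: the Chernoff bounds in (ii) and the uniform-convergence bound in the last paragraph must be taken for the fixed $c,\DDD$ over the random sample, while (i) is over the random index, and these randomness sources are independent, so the union bound is legitimate.
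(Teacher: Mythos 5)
Your algorithm and most of your analysis coincide with the paper's proof: sample $\HHH_i\in_\PPP\HhH$ independently of the data, run the exponential mechanism on $\HHH_i$ with parameter $\epsilon$, and bound three failure events (no $\alpha$-good hypothesis in $\HHH_i$; the good hypothesis has empirical error above $2\alpha$; the exponential mechanism loses more than an additive $\alpha$). The privacy argument and the point that the sample size depends on $\size(\HhH)$ rather than on $|\HhH|$ are also exactly as in the paper.

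The gap is in your last step, and it is not ``routine bookkeeping.'' You claim that $m=\Omega\big(\frac{1}{\alpha}(\size(\HhH)+\ln\frac{1}{\beta})\big)$ suffices for the two-sided additive uniform-convergence statement $|\error_S(f)-\error_{\DDD}(c,f)|\le\alpha$ for all $f\in\HHH_i$ simultaneously. That is false: Hoeffding plus a union bound over $|\HHH_i|\le e^{\size(\HhH)}$ hypotheses requires $m=\Omega\big(\frac{1}{\alpha^2}(\size(\HhH)+\ln\frac{1}{\beta})\big)$ --- a hypothesis with true error near $1/2$ deviates by more than $\alpha$ with constant probability when $m=o(\alpha^{-2})$. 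With that requirement your binding constraint becomes $\max\{\frac{1}{\alpha\epsilon},\frac{1}{\alpha^2}\}\cdot(\size(\HhH)+\ln\frac{1}{\beta})$, which exceeds the claimed $O\big(\frac{1}{\alpha\epsilon}(\size(\HhH)+\ln\frac{1}{\beta})\big)$ whenever $\epsilon\gg\alpha$; compare Lemma~\ref{lem:gen2}, where the paper does use the additive Hoeffding step and pays exactly this extra $\frac{1}{\gamma^2}$ term. The repair is the paper's event $E_2$: a one-sided multiplicative Chernoff bound shows that the probability that some $h\in\HHH_i$ with $\error_{\DDD}(c,h)>6\alpha$ has empirical error at most $3\alpha$ is at most $|\HHH_i|\exp(-\Omega(\alpha m))$, which needs only $m=O\big(\frac{1}{\alpha}\ln\frac{|\HHH_i|}{\beta}\big)$ and converts your guarantee $\error_S(h)\le 3\alpha$ into $\error_{\DDD}(c,h)\le 6\alpha$. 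This relative, one-sided conversion is precisely why the lemma's accuracy is $6\alpha$ rather than the $4\alpha$ your additive argument aims for.
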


\begin{proof}
Let $(\HhH,\PPP)$ be an $(\alpha,\beta)$-probabilistic representation for the class $\CCC$, and consider the following algorithm $A$:
$$\boxed{
\begin{array}{l}
\text{Inputs: } S=(x_i,y_i)_{i=1}^m \text{, and a privacy parameter } \epsilon.\\
{\begin{array}{ll}
1. & \text{Randomly choose } \HHH_i \in_\PPP \HhH .\\
2. & \text{Choose } h \in \HHH_i \text{ using the exp. mechanism with } \epsilon.\\
\end{array}}\\
\end{array}}$$
By the properties of the exponential mechanism, $A$ is $\epsilon$-differentially private. We will show that with sample size $m=O\left(  \frac{1}{\alpha\epsilon} ( \size(\HhH) + \ln(\frac{1}{\beta}) )   \right)$, algorithm $A$ is a $(6\alpha,4\beta)$-PAC learner for $\CCC$. Fix some $c\in \CCC$ and $\DDD$, and define the following 3 good events:
\begin{enumerate}[label=$E_{\arabic*}$]
\item $\HHH_i$ chosen in step 1 contains at least one hypothesis $h$ s.t. $\error_S(h)\leq2\alpha$.
\item For every $h\in\HHH_i$ s.t. $\error_S(h)\leq3\alpha$, it holds that $\error_{\DDD}(c,h)\leq6\alpha$
\item The exponential mechanism chooses an $h$ such that $\error_S(h) \leq \alpha + \min_{f\in \HHH_i}\left\{\error_S(f)\right\}$.
\end{enumerate}

We first show that if those 3 good events happen, algorithm $A$ returns a $6\alpha$-good hypothesis.
Event $E_1$ ensures the existence of a hypothesis $f\in\HHH_i$ s.t. $\error_S(f)\leq2\alpha$. Thus, event $E_1 \cap E_3$ ensures algorithm $A$ chooses (using the exponential mechanism) a hypothesis $h\in\HHH_i$ s.t. $\error_S(h)\leq3\alpha$. Event $E_2$ ensures therefore that this $h$ obeys $\error_{\DDD}(c,h)\leq6\alpha$.

We will now show that those 3 events happen with high probability. As $(\HhH,\PPP)$ is an $(\alpha,\beta)$-probabilistic representation for the class $\CCC$, the chosen $\HHH_i$ contains a hypothesis $h$ s.t.  $\error_{\DDD}(c,h)\leq\alpha$ with probability at least $1-\beta$; by the Chernoff bound with probability at least $1-\exp(-m\alpha/3)$ this hypothesis has empirical error at most $2\alpha$. Event $E_1$ happens with probability at least $(1-\beta)(1-\exp(-m\alpha/3))>1-(\beta+\exp(-m\alpha/3))$, which is at least $(1-2\beta)$ for $m\geq\frac{3}{\alpha}\ln(1/\beta)$.

Using the Chernoff bound, the probability that a hypothesis $h$ s.t. $\error_{\DDD}(c,h)>6\alpha$ has empirical error $\leq3\alpha$ is less than $\exp(-m\alpha3/4)$. Using the union bound, the probability that there is such a hypothesis in $\HHH_i$ is at most $|\HHH_i|\cdot\exp(-m\alpha3/4)$. Therefore, $\Pr[ E_2 ]\geq1-|\HHH_i|\cdot\exp(-m\alpha3/4)$. For $m\geq\frac{4}{3\alpha}( \ln(\frac{|\HHH_i|}{\beta}) )$, this probability is at least $(1-\beta)$.

The exponential mechanism ensures that the probability of event $E_3$ is at least $1-|\HHH_i| \cdot \exp(-\epsilon \alpha m /2)$ (see Section \ref{expMech}), which is at least $(1-\beta)$ for $m \geq \frac{2}{\alpha \epsilon} \ln(\frac{|\HHH_i|}{\beta})$.

All in all, by setting $m=\frac{3}{\alpha\epsilon} ( \size(\HhH) + \ln(\frac{1}{\beta}) ) $ we ensure that the probability of $A$ failing to output a $6\alpha$-good hypothesis is at most $4\beta$.
\end{proof}

We will demonstrate the above lemma with two examples:

\begin{example}[Efficient learner for $\point_d$]\label{example:pointAlgo}
As described in Example \ref{example:pointRep}, there exists an $(\HhH,\PPP)$ that  $(\alpha/6,\beta/4)$-probabilistically represents the class $\point_d$, where $\size(\HhH)=O_{\alpha,\beta,\epsilon}(1)$.
By Lemma \ref{lem:equivalence2}, there exists an algorithm that $(\alpha,\beta,\epsilon)$-PPAC learns $\CCC$ with sample size $m=O_{\alpha,\beta,\epsilon}(1)$.

The existence of an algorithm with sample complexity $O(1)$ was already proven in \cite{BKN10}. Moreover, assuming the existence of oneway functions, their learner is efficient. Our constructions yields an efficient learner, without assumptions. To see this, consider again algorithm $A$ presented in the above proof, and note that as $\size(\HhH)$ is constant, step 2 could be done in constant time. 
Step 1 can be done efficiently as we can efficiently sample a set $\HHH_i\in_{\PPP}\HhH$.
In Claim \ref{claim:pointEff} we initially construct a probabilistic representation in which the description of every hypothesis is exponential in $d$.
The representation is than revised using pairwise independence to yield a representation in which every hypothesis $h$ has a short description, and given $x$ the value $h(x)$ can be computed efficiently.
\end{example}

\begin{example}[$\point_{\N}$]\label{example:pointRatio}
Consider the class $\point_{\N}$,  which is exactly like $\point_d$, only over the natural numbers.
By results of \cite{CH11,BKN10}, it is impossible to properly PPAC learn the class $\point_{\N}$.
Our construction can yield an (inefficient) improper private learner for $\point_{\N}$ with $O_{\alpha,\beta,\epsilon}(1)$ samples. The details are deferred to Section \ref{sec:proofs}.
\end{example}

The next lemma shows that a private learning algorithm implies a probabilistic representation. This lemma can be used to lower bound the sample complexity of private learners.

\begin{lemma}\label{lem:oldEquivalence1}
If there exists an algorithm $A$ that $(\alpha,\frac{1}{2},\epsilon)$-PPAC learns a concept class $\CCC$ with a sample size $m$, then there exists a pair $(\HhH,\PPP)$ that $(\alpha,1/4)$-probabilistically represents the class $\CCC$ such that $\size(\HhH) = O\left(m\epsilon\right)$.
\end{lemma}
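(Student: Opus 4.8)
The plan is to read the probabilistic representation directly off the behaviour of the learner $A$ on a single, fixed ``dummy'' database, and then amplify by independent repetitions. The multiplicative $e^{\epsilon m}$ degradation that $\epsilon$-differential privacy incurs between databases differing in all $m$ entries (the consequence of Definition~\ref{def:eps-dp} recorded in Section~\ref{sec:defs}) is exactly what makes $\size(\HhH)$ come out as $O(\epsilon m)$, since $\size$ is a logarithm.

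Concretely, I would fix an arbitrary database $\db_0\in(X_d\times\{0,1\})^m$ (say, all entries $(0^d,0)$), set $k=\lceil 4e^{\epsilon m}\rceil$, and let $\PPP$ be the distribution --- on the collection $\HhH$ of all at-most-$k$-element sets of Boolean functions --- obtained by drawing $h_1,\dots,h_k$ independently from $A(\db_0)$ and outputting $\{h_1,\dots,h_k\}$. Identifying each hypothesis with the function it computes on the finite domain $X_d$, the support of $A(\db_0)$ is finite, so $\HhH$ is a finite family, and every member has at most $k$ elements, giving $\size(\HhH)\le\ln k\le\epsilon m+\ln 5=O(\epsilon m)$. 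The point to stress is that $\HhH$ and $\PPP$ depend only on $A$ and on $\db_0$, hence not on any target concept or distribution.

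To verify that $(\HhH,\PPP)$ is an $(\alpha,\tfrac14)$-probabilistic representation, I would fix $c\in\CCC$ and $\DDD$ and write $G=\{h:\error_\DDD(c,h)\le\alpha\}$. The utility of $A$ (with confidence parameter $\tfrac12$) says $\Pr_{\db,A}[A(\db)\in G]\ge\tfrac12$ for $\db$ a length-$m$ sample labeled by $c$ with $x_i$ drawn i.i.d.\ from $\DDD$; averaging over the sample yields a fixed database $\db^*$ with $\Pr_A[A(\db^*)\in G]\ge\tfrac12$. Since $\db_0$ and $\db^*$ are both length-$m$ databases over $X_d\times\{0,1\}$, the consequence of differential privacy noted above gives $\Pr_A[A(\db_0)\in G]\ge e^{-\epsilon m}\Pr_A[A(\db^*)\in G]\ge\tfrac12 e^{-\epsilon m}$. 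Hence a random $\HHH_i\in_\PPP\HhH$ fails to contain an $\alpha$-good hypothesis for $c,\DDD$ with probability at most $\bigl(1-\tfrac12 e^{-\epsilon m}\bigr)^{k}\le\exp\!\bigl(-\tfrac12 k\,e^{-\epsilon m}\bigr)\le e^{-2}<\tfrac14$, which is exactly what Definition~\ref{def:prep} requires. Since this holds for every $c$ and $\DDD$, the construction works.

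The main point --- rather than an obstacle in the technical sense --- is spotting the ``dummy database'' move: trading the data-dependent output distribution $A(\db)$ for the data-independent distribution $A(\db_0)$ at a multiplicative cost of $e^{\epsilon m}$ in the probability of hitting $G$, which is affordable precisely because $\RepDim$ is measured logarithmically. The things to be careful about are getting the quantifier order right (fixing $\HhH$ and $\PPP$, i.e.\ $\db_0$ and $k$, obliviously, before quantifying over $c$ and $\DDD$), extracting a single good database $\db^*$ from the learner's average-over-samples guarantee by an averaging argument, and calibrating $k=\Theta(e^{\epsilon m})$ so that the miss probability drops below $1/4$ while $\ln k$ stays $O(\epsilon m)$. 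Finiteness of $\HhH$ is the one spot where the argument uses that the domain $X_d$ is finite, which is also why it does not transfer verbatim to infinite domains such as the one behind $\point_{\N}$.
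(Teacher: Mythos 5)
Your proposal is correct and follows essentially the same route as the paper's proof: average over samples to extract a single database on which $A$ succeeds with probability $\ge \tfrac12$, use the $e^{-\epsilon m}$ group-privacy consequence to transfer this to a fixed dummy database $\vec{0}$, and amplify by $\Theta(e^{\epsilon m})$ independent executions of $A(\vec{0})$ so that $\size(\HhH)=\ln(\Theta(e^{\epsilon m}))=O(\epsilon m)$. The only differences are cosmetic (your constant $k=\lceil 4e^{\epsilon m}\rceil$ versus the paper's $2\ln(4)e^{\epsilon m}$, and taking all small sets of Boolean functions rather than small subsets of $A$'s hypothesis class).
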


\begin{proof}
Let $A$ be an $(\alpha,\frac{1}{2},\epsilon)$-PPAC learner for a class $\CCC$ using hypothesis class $\FFF$ whose sample size is $m$. For a target concept $c \in \CCC$ and a distribution $\DDD$ on $X_d$, we define $G$ as the set of all hypotheses $h\in\FFF$ such that $\error_{\DDD}(c,h)\leq\alpha$.
Fix some $c\in\CCC$ and a distribution $\DDD$ on $X_d$.
As $A$ is an $(\alpha,\frac{1}{2})$-PAC learner,
$\Pr_{\DDD , A}\left[ A(\db) \in G \right]\geq \frac{1}{2}$,
where the probability is over $A$'s randomness and over sampling the examples in $\db$ (according to $\DDD$).
Therefore, there exists a database $\db$ of $m$ samples such that
$\Pr_A\left[ A(\db) \in G \right]\geq \frac{1}{2}$,
where the probability is only over the randomness of $A$.
As $A$ is $\epsilon$-differentially private,
$ \Pr_A\left[ A(\vec{0}) \in G \right] \geq e^{-m\epsilon} \cdot \Pr_A\left[ A(\db) \in G \right]\geq \frac{1}{2} e^{-m\epsilon} $,
where $\vec{0}$ is a database with $m$ zeros.\footnote{Choosing $\vec{0}$ is arbitrary; we could have chosen any database.} That is, $\Pr_A\left[ A(\vec{0}) \notin G \right] \leq 1-\frac{1}{2} e^{-m\epsilon}$. Now, consider a set $\HHH$ containing the outcomes of $2 \ln(4) e^{m\epsilon} $ executions of $A(\vec0)$. The probability that $\HHH$ does not contain an $\alpha$-good hypothesis is at most $(1-\frac{1}{2} e^{-m\epsilon})^{2 \ln(4) e^{m\epsilon} } \leq \frac{1}{4}$. Thus,
$\HhH = \left\{ \HHH \subseteq \FFF \; : \; |\HHH| \leq  2 \ln(4) e^{m\epsilon}  \right\}$,
and $\PPP$, the distribution induced by $A(\vec0)$, are an $(\alpha,1/4)$-probabilistic representation for class $\CCC$. It follows that $\size(\HhH) = \max\{  \;  \ln|\HHH| : \HHH \in \HhH  \;  \} = \ln(2\ln(4))+ m\epsilon$.
\end{proof}

The above lemma yields a lower bound of $\Omega\left(\frac{1}{\epsilon}\RepDim(\CCC)\right)$ on the sample complexity of private learners for a concept class $\CCC$.
To see this, fix $\alpha\leq\frac{1}{4}$ and let $A$ be an $(\alpha,\frac{1}{2},\epsilon)$-PPAC learner for $\CCC$ with sample size $m$. By the above lemma, there exists a pair $(\HhH,\PPP)$ that $(\alpha,1/4)$-probabilistically represents $\CCC$ s.t. $\size(\HhH) = \ln(2\ln(4))+ m\epsilon$. Therefore, by definition, $\RepDim(\CCC)\leq\ln(2\ln(4))+ m\epsilon$. Thus, $m\geq\frac{1}{\epsilon}(\RepDim(\CCC)-\ln(2\ln(4)))=\Omega\left(\frac{1}{\epsilon}\RepDim(\CCC)\right)$.

In order to refine this lower bound (and incorporate $\alpha$ in it), we will need a somewhat stronger version of this lemma:

\begin{lemma}\label{lem:equivalence1}
Let $\alpha\leq1/4$. If there exists an algorithm $A$ that $(\alpha,\frac{1}{2},\epsilon)$-PPAC learns a concept class $\CCC$ with a sample size $m$, then there exists a pair $(\HhH,\PPP)$ that $(1/4,1/4)$-probabilistically represents the class $\CCC$ such that $\size(\HhH) = O\left(m\epsilon\alpha\right)$.
\end{lemma}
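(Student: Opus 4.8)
The idea is to combine Lemma~\ref{lem:oldEquivalence1} with a ``boosting'' of the accuracy parameter from $\alpha$ down to $1/4$, exploiting the fact that $\alpha \le 1/4$ means we are trying to go from a \emph{more} accurate guarantee to a \emph{less} accurate one, so no accuracy amplification is actually needed --- what we need is to shrink the \emph{size} of the representation from $O(m\epsilon)$ to $O(m\epsilon\alpha)$. By Lemma~\ref{lem:oldEquivalence1}, from the $(\alpha,\tfrac12,\epsilon)$-PPAC learner we obtain an $(\alpha,\tfrac14)$-probabilistic representation $(\HhH,\PPP)$ with $\size(\HhH)=O(m\epsilon)$, where each $\HHH\in\HhH$ consists of at most $2\ln(4)e^{m\epsilon}$ hypotheses. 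Since $\alpha\le 1/4$, an $\alpha$-good hypothesis is in particular $\tfrac14$-good, so $(\HhH,\PPP)$ is already a $(\tfrac14,\tfrac14)$-probabilistic representation --- but with the wrong (too large) size, so the point is to replace each $\HHH_i$ by a much smaller subclass while retaining the property that (with probability $\ge 1-\tfrac14$ over $\PPP$) some member is $\tfrac14$-good.

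First I would look more carefully at where the bound $e^{m\epsilon}$ in Lemma~\ref{lem:oldEquivalence1} comes from: it is the loss factor $e^{-m\epsilon}$ in passing from a ``good'' database $\db$ to the fixed database $\vec 0$ via $\epsilon$-differential privacy over $m$ entries. The refinement is to use a \emph{partially} overwritten database rather than the all-zero database. Concretely, fix $c\in\CCC$ and $\DDD$; there is a database $\db^{\star}$ of $m$ labeled examples on which $\Pr_A[A(\db^\star)\in G]\ge \tfrac12$. Now consider overwriting only a $(1-c_0\alpha)$-fraction of the entries of $\db^\star$ with a fixed dummy value (for a suitable constant $c_0$), leaving roughly $c_0\alpha m$ ``real'' entries; call the result $\db'$. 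By group privacy (the iterated form of \defref{eps-dp}, exactly the inequality $\Pr[A(\db_1)\in\FFF]\ge e^{-\epsilon k}\Pr[A(\db_2)\in\FFF]$ for databases at distance $k$), $\Pr_A[A(\db')\in G]\ge e^{-\epsilon(1-c_0\alpha)m}\cdot\tfrac12$, and taking $\HHH$ to be the outcomes of $O(e^{\epsilon(1-c_0\alpha)m})$ independent runs of $A(\db')$ gives a subclass that is $\alpha$-good with probability $\ge\tfrac34$. But this only changes the exponent from $m\epsilon$ to $(1-c_0\alpha)m\epsilon$, which is not the claimed $O(m\epsilon\alpha)$. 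To get all the way down, the key observation is that $\db'$ depends only on the $O(\alpha m)$ real entries, and one should instead directly bound the number of \emph{distinct} outcome-distributions $A$ can have: run $A$ on all databases obtained by filling the $O(\alpha m)$ real slots arbitrarily from $X_d\times\{0,1\}$ and dummies elsewhere, or --- cleaner --- observe that $\PPP$ itself need only be supported on the family of all subclasses $\HHH$ of $\FFF$ of cardinality $O(e^{\epsilon\cdot O(\alpha m)})$, obtained by running $A$ that many times on the $(1-\Theta(\alpha))$-dummy database. I expect the right statement is: overwrite all but $\Theta(\alpha m)$ entries, note the PAC guarantee for the \emph{learner} still forces $\Pr_A[A(\db^\star)\in G]\ge\tfrac12$ for the original full database, pay $e^{\epsilon(1-\Theta(\alpha))m}$ once, then absorb the $e^{(1-\Theta(\alpha))\epsilon m}$ factor into $\PPP$ (which, per the remark after \defref{prep}, does not count toward $\size$) and keep only $e^{\Theta(\alpha m\epsilon)}$ samples in each $\HHH$.

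So the main obstacle --- and the step I would spend the most care on --- is isolating precisely which factor is allowed to be charged to $r=|\HhH|$ (hence to $\PPP$, which is free) versus which must be charged to $\size(\HhH)=\max_i\ln|\HHH_i|$. The clean route: let $\HhH$ be the family of all subsets $\HHH\subseteq\FFF$ with $|\HHH|\le 2\ln(4)\,e^{\Theta(\alpha m\epsilon)}$; for $\PPP$, sample a ``good'' database $\db^\star$-like object, overwrite all but $\Theta(\alpha m)$ entries to get $\db'$, run $A(\db')$ enough times to collect $|\HHH|$ hypotheses, and output that set. The probability that this set misses every $\alpha$-good hypothesis is $(1-\tfrac12 e^{-\Theta(\alpha m\epsilon)})^{2\ln 4\, e^{\Theta(\alpha m\epsilon)}}\le\tfrac14$, provided one first argues $\Pr_A[A(\db')\in G]\ge\tfrac12 e^{-\Theta(\alpha m\epsilon)}$ --- and \emph{this} last inequality is the place where I would need to be honest about whether one can get a good starting database $\db'$ with only $\Theta(\alpha m)$ non-dummy entries. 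The resolution is that one does \emph{not} need $\db'$ to be good for the learner in the PAC sense; one only needs $\Pr_A[A(\db')\in G]$ not too small, and that follows by applying group privacy with parameter $k=(1-\Theta(\alpha))m$ to the genuinely good full-size database $\db^\star$ guaranteed by the PAC property, giving $\Pr_A[A(\db')\in G]\ge e^{-\epsilon(1-\Theta(\alpha))m}\cdot\tfrac12$. Hence the true cost is $e^{\epsilon(1-\Theta(\alpha))m}$ runs, of which a $(1-\Theta(\alpha))m\epsilon$ amount is pushed into $\PPP$ and only $\Theta(\alpha m\epsilon)$ remains as $\size(\HhH)$; wiring that split correctly is the crux, and once done the arithmetic matches $\size(\HhH)=O(m\epsilon\alpha)$.
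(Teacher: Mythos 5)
Your plan correctly identifies the crux (reducing the group-privacy cost in the exponent from $\epsilon m$ to $\epsilon\alpha m$), but the mechanism you propose for achieving it does not work, and this is a genuine gap rather than a wiring detail. In your construction the only quantitative fact available is $\Pr_A[A(\db')\in G]\geq \frac12 e^{-\epsilon(1-\Theta(\alpha))m}$, so to catch a $\frac14$-good hypothesis with constant probability you must place $\Theta\bigl(e^{\epsilon(1-\Theta(\alpha))m}\bigr)$ independent outcomes of $A(\db')$ inside a \emph{single} $\HHH_i$: \defref{prep} requires that the one randomly chosen $\HHH_i$ contain a good hypothesis with probability $\geq 3/4$, so this count is charged to $\size(\HhH)=\max_i\ln|\HHH_i|$ and cannot be ``absorbed into $\PPP$.'' The alternative of pushing the choice of the $\Theta(\alpha m)$ non-dummy entries into the randomness of $\PPP$ also fails, because $(\HhH,\PPP)$ must be fixed once and work for \emph{every} $c$ and $\DDD$, whereas your $\db'$ retains entries of a good database $\db^{\star}$ that depends on $c$ and $\DDD$; no fixed distribution over fillings of those slots puts non-negligible mass on fillings that are Hamming-close to typical samples of an arbitrary unknown $\DDD$ (and enumerating all fillings inside one $\HHH_i$ would cost an extra factor $2^{\Theta(\alpha m d)}$ in $|\HHH_i|$, giving $\size=O(\alpha m(d+\epsilon))$, not $O(\alpha\epsilon m)$).

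The paper's resolution, which is the idea missing from your proposal, is to modify the \emph{data distribution} rather than the database: for fixed $c,\DDD$ define $\widetilde{\DDD}$ that places mass $1-4\alpha+4\alpha\Pr_{\DDD}[0^d]$ on the fixed point $0^d$ and mass $4\alpha\Pr_{\DDD}[x]$ elsewhere, and invoke the PAC guarantee of $A$ on samples from $\widetilde{\DDD}$. This buys two things at once. First, a typical good database for $\widetilde{\DDD}$ contains $0^d$ in all but $8\alpha m$ positions (Chernoff), so it lies within Hamming distance $8\alpha m$ of the fixed database $\vec{0}_\sigma$ (with $\sigma$ a guess for $c(0^d)$, both guesses being included), and group privacy now costs only $e^{8\alpha\epsilon m}$ -- this is exactly where the factor $\alpha$ in $\size(\HhH)=O(m\epsilon\alpha)$ comes from, and since $\vec{0}_0,\vec{0}_1$ are fixed independently of $c,\DDD$, the induced $\PPP$ is legitimately a single distribution. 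Second, because $\Pr_{\widetilde{\DDD}}[x]\geq 4\alpha\Pr_{\DDD}[x]$ for all $x$, any hypothesis with $\error_{\DDD}(c,h)>\frac14$ has $\error_{\widetilde{\DDD}}(c,h)>\alpha$, so the $\alpha$-good hypotheses that $A$ is guaranteed to output under $\widetilde{\DDD}$ are automatically $\frac14$-good under $\DDD$; this is precisely where the hypothesis $\alpha\leq\frac14$ is used, and it supplies the accuracy translation that your proposal never addresses (you keep the original $\DDD$ throughout, so you never exploit $\alpha\leq\frac14$ at all).
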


\begin{proof}
Let $A$ be an $(\alpha,\frac{1}{2},\epsilon)$-PPAC learner for the class $\CCC$ using hypothesis class $\FFF$ whose sample size is $m$.
Without loss of generality, we can assume that $m\geq\frac{3\ln(4)}{4\alpha}$ (since A can ignore part of the sample).
For a target concept $c \in \CCC$ and a distribution $\DDD$ on $X_d$, we define
$$G_{\DDD}^\alpha = \{  h\in\FFF : \error_{\DDD}(c,h)\leq\alpha \}.$$
Fix some $c\in\CCC$ and a distribution $\DDD$ on $X_d$, and define the following distribution $\widetilde{\DDD}$ on $X_d$:
$$ \Pr_{\widetilde{\DDD}}[x]=\begin{cases}
    1-4\alpha+4\alpha\cdot \Pr_{\DDD}[x], & x=0^d.\\
    4\alpha\cdot \Pr_{\DDD}[x], & x \neq 0^d.
  \end{cases}  $$
Note that for every $x\in X_d$,
\begin{eqnarray}\label{eq:tildeD}
\Pr_{\widetilde{\DDD}}[x]\geq4\alpha\cdot \Pr_{\DDD}[x].
\end{eqnarray}
As $A$ is an $(\alpha,\frac{1}{2})$-PAC learner, it holds that
$$\Pr_{\widetilde{\DDD} , A}\left[ A(\db) \in G_{\widetilde{\DDD}}^\alpha \right]\geq \frac{1}{2},$$
where the probability is over $A$'s randomness and over sampling the examples in $\db$ (according to $\widetilde{\DDD}$). In addition, by inequality (\ref{eq:tildeD}), every hypothesis $h$ with $\error_{\DDD}(c,h)>1/4$ has error strictly greater than $\alpha$ under $\widetilde{\DDD}$:
\begin{eqnarray*}
\error_{\widetilde{\DDD}}(c,h) \geq 4\alpha\cdot\error_{\DDD}(c,h)>\alpha.
\end{eqnarray*}
So, every $\alpha$-good hypothesis for $c$ and $\widetilde{\DDD}$ is a $\frac{1}{4}$-good hypothesis for $c$ and $\DDD$. That is,
$ G_{\widetilde{\DDD}}^\alpha \subseteq G_{\DDD}^{1/4}$.
Therefore, $\Pr_{\widetilde{\DDD} , A}\left[ A(\db) \in G_{\DDD}^{1/4} \right]\geq \frac{1}{2}$.

We say that a database $\db$ of $m$ labeled examples is {\em good} if the unlabeled example $0^d$ appears in $\db$ at least $(1-8\alpha)m$ times.
Let $\db$ be a database constructed by taking $m$ i.i.d. samples from $\widetilde{\DDD}$, labeled by $c$.
By the Chernoff bound, $\db$ is good with probability at least $1-\exp(-4\alpha m /3)$. Hence,

$$\Pr_{\widetilde{\DDD} , A}\left[   ( A(\db) \in G_{\DDD}^{1/4} ) \wedge ( \db {\rm \;is\;good } )    \right]\geq \frac{1}{2}-\exp(-4\alpha m /3)\geq\frac{1}{4}.$$

Therefore, there exists a database $\dbGood$ of $m$ samples that contains the unlabeled sample $0^d$ at least $(1-8\alpha)m$ times, and
$\Pr_A\left[ A(\dbGood) \in G_{\DDD}^{1/4}  \right]\geq\frac{1}{4}$,
where the probability is only over the randomness of $A$. All of the examples in $\dbGood$ (including the example $0^d$) are labeled by $c$.

For $\sigma\in\{0,1\}$, denote by $\vec{0}_\sigma$ a database containing $m$ copies of the example $0^d$ labeled as $\sigma$. As $A$ is $\epsilon$-differentially private, and as the target concept $c$ labels the example $0^d$ by either $0$ or $1$, for at least one $\sigma\in\{0,1\}$ it holds that
\begin{align}\label{eqn:epsilonDelta}  
\Pr_A[A(\vec{0}_\sigma)\in G_{\DDD}^{1/4}]
&\geq \exp(-8\alpha\epsilon m) \cdot \Pr_A\left[ A(\dbGood) \in G_{\DDD}^{1/4}  \right] \nonumber\\
&\geq \exp(-8\alpha\epsilon m) \cdot 1/4.
\end{align}
That is, $\Pr_A[A(\vec{0}_{\sigma})\notin G_{\DDD}^{1/4}] \leq 1-\frac{1}{4} e^{-8\alpha\epsilon m}$. Now, consider a set $\HHH$ containing the outcomes of $4 \ln(4) e^{8\alpha\epsilon m} $ executions of $A(\vec{0}_0)$, and the outcomes of $4 \ln(4) e^{8\alpha\epsilon m} $ executions of $A(\vec{0}_1)$. The probability that $\HHH$ does not contain a $\frac{1}{4}$-good hypothesis for $c$ and $\DDD$ is at most $(1-\frac{1}{4} e^{-8\alpha\epsilon m})^{4 \ln(4) e^{8\alpha\epsilon m} } \leq \frac{1}{4}$. Thus,
$\HhH = \left\{ \HHH \subseteq \FFF \; : \; |\HHH| \leq  2 \cdot 4 \ln(4) e^{8\alpha\epsilon m}  \right\}$,
and $\PPP$, the distribution induced by $A(\vec{0}_0)$ and $A(\vec{0}_1)$, are a $(1/4,1/4)$-probabilistic representation for the class $\CCC$.
Note that the value $c(0^d)$ is unknown, and can be either 0 or 1. Therefore the construction uses the two possible values (one of them correct).

It holds that $\size(\HhH) = \max\{  \;  \ln|\HHH| : \HHH \in \HhH  \;  \} = \ln(8\ln(4))+ 8\alpha\epsilon m = O\left(m\epsilon\alpha\right) $.
\end{proof}

Lemma \ref{lem:noParams} shows how to construct a probabilistic representation for an arbitrary $\alpha$ and $\beta$ from a probabilistic representation with $\alpha=\beta=1/4$; in other words we boost $\alpha$ and $\beta$. The proof of this lemma is combinatorial. It allows us to start with a private learning algorithm with constant $\alpha$ and $\beta$, move to a representation, use the combinatorial boosting, and move back to a private algorithm with small $\alpha$ and $\beta$. This should be contrasted with the private boosting of~\cite{DRV10} which is algorithmic and more complicated (however, the algorithm of Dwork et al.~\cite{DRV10} is computationally efficient).

We first show how to construct a probabilistic representation for arbitrary $\beta$ from a probabilistic representation with $\beta=1/4$.

\begin{claim}\label{claim:boostB}
For every concept class $\CCC$ and for every $\beta$, there exists a pair $(\HhH,\PPP)$ that $(1/4,\beta)$-probabilistically represents $\CCC$ where $\size(\HhH) \leq \RepDim(\CCC) + \ln\ln(1/\beta)$.
\end{claim}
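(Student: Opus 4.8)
The plan is to amplify the success probability of a $(1/4,1/4)$-probabilistic representation by sampling several hypothesis classes independently and taking their union. Concretely, let $(\HhH_0,\PPP_0)$ be a $(1/4,1/4)$-probabilistic representation for $\CCC$ with $\size(\HhH_0)=\RepDim(\CCC)$, where $\HhH_0=\{\HHH_1,\dots,\HHH_r\}$ and $\PPP_0$ is a distribution on $\{1,\dots,r\}$. Set $t=\lceil\log_{4/3}(1/\beta)\rceil=O(\ln(1/\beta))$. Define a new family $\HhH$ whose members are indexed by $t$-tuples $(i_1,\dots,i_t)\in\{1,\dots,r\}^t$: the class associated with $(i_1,\dots,i_t)$ is $\HHH_{i_1}\cup\HHH_{i_2}\cup\cdots\cup\HHH_{i_t}$. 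Let $\PPP$ be the product distribution $\PPP_0^{\otimes t}$ on these tuples.

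First I would verify the representation property. Fix $c\in\CCC$ and a distribution $\DDD$ on $X_d$. Since $(\HhH_0,\PPP_0)$ is a $(1/4,1/4)$-probabilistic representation, for a single draw $\HHH_{i_j}\in_{\PPP_0}\HhH_0$ the probability that $\HHH_{i_j}$ contains no $(1/4)$-good hypothesis for $c$ and $\DDD$ is at most $1/4$. Drawing $t$ indices independently, the probability that \emph{none} of $\HHH_{i_1},\dots,\HHH_{i_t}$ contains a $(1/4)$-good hypothesis is at most $(1/4)^t\leq\beta$ by the choice of $t$ (in fact the bound is far stronger than needed; any $t=\Theta(\ln(1/\beta))$ with a small enough constant works, and I have written $\log_{4/3}$ only so that the additive term in the size bound comes out clean). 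Hence the union $\HHH_{i_1}\cup\cdots\cup\HHH_{i_t}$ contains a $(1/4)$-good hypothesis with probability at least $1-\beta$, so $(\HhH,\PPP)$ is a $(1/4,\beta)$-probabilistic representation of $\CCC$.

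Next I would bound the size. Each member of $\HhH$ is a union of $t$ classes from $\HhH_0$, so $|\HHH_{i_1}\cup\cdots\cup\HHH_{i_t}|\leq t\cdot\max_j|\HHH_{i_j}|\leq t\cdot e^{\size(\HhH_0)}$, giving $\size(\HhH)\leq\size(\HhH_0)+\ln t=\RepDim(\CCC)+\ln\bigl(\log_{4/3}(1/\beta)\bigr)$, which is at most $\RepDim(\CCC)+\ln\ln(1/\beta)$ up to the constant hidden in the base of the logarithm — I would either absorb it into the statement's implicit constants or observe that $\ln\log_{4/3}(1/\beta)=\ln\ln(1/\beta)+O(1)$ and that the remark following \defref{repdim} already licenses slack in the constants. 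I expect the only real subtlety to be bookkeeping: making sure $t$ is chosen so the clean bound $\RepDim(\CCC)+\ln\ln(1/\beta)$ holds exactly rather than up to an additive constant, and being careful that taking a union of the hypothesis classes (rather than, say, their product) is exactly what preserves the "$\exists h$ close to $c$" property while only multiplying cardinalities by $t$. Everything else is a direct union-bound / independence argument.
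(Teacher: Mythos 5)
Your proposal is correct and is essentially the paper's proof: the paper likewise amplifies a $(1/4,1/4)$-probabilistic representation of size $\RepDim(\CCC)$ by drawing $M$ classes independently according to $\PPP^0$ and returning their union, bounding the failure probability by $(1/4)^M$ via independence and the size by $\size(\HhH^0)+\ln M$. The only difference is bookkeeping: the paper takes $M=\ln(1/\beta)$, so that $(1/4)^M=\beta^{\ln 4}\leq\beta$ and the bound $\size(\HhH)\leq\RepDim(\CCC)+\ln\ln(1/\beta)$ comes out exactly, whereas your $t=\lceil\log_{4/3}(1/\beta)\rceil$ leaves a spurious additive constant that the claim as stated (with no implicit constants) does not allow --- an easy fix, since any $t$ between $\log_{4}(1/\beta)$ and $\ln(1/\beta)$ suffices.
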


\begin{proof}
Let $\beta<1/4$, and let $(\HhH^0,\PPP^0)$ be a $(\frac{1}{4},\frac{1}{4})$- probabilistic representation for $\CCC$ with $\size(\HhH^0)=\RepDim(\CCC) \triangleq k_0$ (that is, for every $\HHH^0_i \in \HhH^0$ it holds that $|\HHH^0_i| \leq e^{k_0}$).
Denote $\HhH^0=\{ \HHH^0_1,\HHH^0_2,\ldots,\HHH^0_r  \}$, and consider the following family of hypothesis classes:
$$\HhH^1= \left\{ \HHH^0_{i_1} \cup \cdots \cup \HHH^0_{i_{\ln(1/\beta)}} \; : \; 1\leq i_1\leq \dots \leq i_{\ln(1/\beta)} \leq r \right\}.$$
Note that for every $\HHH^1_i\in\HhH^1$ it holds that $|\HHH^1_i| \leq \ln(1/\beta) e^{k_0}$ and so $\size(\HhH^1) \triangleq k_1 \leq k_0+\ln\ln(1/\beta)$.
We will now show an appropriate distribution $\PPP^1$ on $\HhH^1$ s.t. $(\HhH^1,\PPP^1)$ is a $(\frac{1}{4},\beta)$-probabilistic representation for $\CCC$. 
To this end, consider the following process for randomly choosing an $\HHH^1 \in \HhH^1$:

$$\boxed{
\begin{array}{ll}
1.&\text{Denote } M=\ln(1/\beta)\\
2.& \text{For } i=1, \ldots, M:\\
& \hspace{5 mm} \text{Randomly choose } \HHH^0_i \in_{\PPP_0} \HhH^0 .\\
3.&\text{Return } \HHH^1=\bigcup_{i=1}^M \HHH^0_i.
\end{array}}$$

The above process induces a distribution on $\HhH^1$, denoted as $\PPP^1$. As $\HhH^0$ is a 
$(\frac{1}{4},\frac{1}{4})$-probabilistic representation for $\CCC$, we have that
\begin{eqnarray*}
&&\Pr_{\PPP^1}\left[ \nexists h\in \HHH^1 \; s.t. \; \error_\DDD(c,h) \leq 1/4\right] = \\
&&=\prod_{i=1}^M{ \Pr_{\PPP^0}\left[ \nexists h\in \HHH^0_i \; s.t. \; \error_\DDD(c,h) \leq 1/4\right] } \leq \\
&&\leq \left(\frac{1}{4}\right)^M \leq \beta.
\end{eqnarray*}
\end{proof}

\begin{lemma}\label{lem:noParams}
For every concept class $\CCC$, every $\alpha$, and every $\beta$, there exists $(\HhH,\PPP)$ that $(\alpha,\beta)$-probabilistically represents $\CCC$ where
$$\size(\HhH) = O\Big(    \ln(\frac{1}{\alpha}) \cdot \big(  \RepDim(\CCC) + \ln\ln\ln(\frac{1}{\alpha}) + \ln\ln(\frac{1}{\beta})  \big)  \Big).$$
\end{lemma}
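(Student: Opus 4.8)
The plan is to boost the accuracy parameter $\alpha$ on top of the confidence boosting already done in Claim~\ref{claim:boostB}. The starting point is a $(\frac14,\beta')$-probabilistic representation $(\HhH',\PPP')$ with $\size(\HhH') \le \RepDim(\CCC) + \ln\ln(1/\beta')$, for a value $\beta'$ to be chosen below; the goal is to amplify the accuracy from $\frac14$ down to $\alpha$. The key combinatorial observation is that a classifier with error $\le \frac14$ on \emph{every} distribution $\DDD$ can be used, via a majority-vote / multiplicative-weights style argument, to build a classifier with error $\le \alpha$: this is exactly the guarantee behind classical boosting, but here it is applied at the level of representations rather than algorithms. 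Concretely, for a fixed target $c$ and distribution $\DDD$, one runs $T = O(\ln(1/\alpha))$ rounds of a weight-update procedure over the domain; in round $t$ one invokes the $\frac14$-representation against the current re-weighted distribution $\DDD_t$ to extract (with probability $\ge 1-\beta'$) a hypothesis $h_t$ with $\error_{\DDD_t}(c,h_t)\le\frac14$; the majority vote of $h_1,\dots,h_T$ then has error $\le\alpha$ under $\DDD$ by the standard boosting analysis.

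The step I would carry out next is to turn this adaptive process into a single (non-adaptive) probabilistic representation. The point is that the representation does not let us ``condition'' on a bad draw, so instead we take a direct product: sample $T$ hypothesis classes $\HHH'_{i_1},\dots,\HHH'_{i_T}$ i.i.d.\ from $\PPP'$ and form the class $\HHH = \{\,\maj(h_1,\dots,h_T) : h_t \in \HHH'_{i_t}\,\}$ of all majority votes of one hypothesis from each sampled class. This is the natural analogue of the union construction in Claim~\ref{claim:boostB}, except that we combine the $T$ sampled classes by taking all majority-vote combinations rather than their union. One checks that if all $T$ sampled classes are ``good'' for $c$ and the respective re-weighted distributions $\DDD_1,\dots,\DDD_T$ arising in the boosting run — which happens with probability $\ge 1 - T\beta'$ by a union bound — then the required majority vote lies in $\HHH$. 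Choosing $\beta' = \beta/(2T) = \Theta(\beta/\ln(1/\alpha))$ makes the overall failure probability at most $\beta$, and then $\ln\ln(1/\beta') = O(\ln\ln(1/\beta) + \ln\ln\ln(1/\alpha))$, matching the claimed bound.

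The size bookkeeping is then routine: each sampled class has at most $e^{\size(\HhH')}$ hypotheses, so the number of majority-vote combinations is at most $e^{T\cdot\size(\HhH')}$, giving
$$\size(\HhH) \le T \cdot \size(\HhH') = O\!\Big(\ln(\tfrac1\alpha)\cdot\big(\RepDim(\CCC) + \ln\ln(\tfrac1\beta) + \ln\ln\ln(\tfrac1\alpha)\big)\Big),$$
as desired. I expect the main obstacle to be the de-adaptivization: the re-weighted distributions $\DDD_1,\dots,\DDD_T$ in the boosting procedure depend on the hypotheses chosen in earlier rounds, yet the representation property of $(\HhH',\PPP')$ is a statement about a \emph{fixed} distribution, so one must argue carefully that it suffices to have the $i$-th sampled class be good against \emph{the} distribution $\DDD_i$ that the (deterministic, once the $h_t$'s are fixed) boosting procedure produces — i.e., that the relevant event is determined by the sequence of draws and one can union-bound over the $T$ rounds. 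A clean way to handle this is to fix, for each $c$ and $\DDD$, the deterministic boosting strategy that maps a partial sequence of hypotheses to the next re-weighting, and to observe that the ``all rounds good'' event has probability $\ge 1 - T\beta'$ regardless of which hypotheses the adversary would pick, since in each round the bad event is contained in the fixed-distribution failure event of $(\HhH',\PPP')$ for the (now determined) $\DDD_t$. The remaining details — the precise number of rounds $T$ and the standard multiplicative-weights error analysis showing the majority vote achieves error $\le\alpha$ — I would cite from the boosting literature rather than reprove.
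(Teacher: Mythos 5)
Your proposal matches the paper's proof essentially step for step: boost the confidence to $\beta/\Theta(T)$ via Claim~\ref{claim:boostB}, take $T=O(\ln(1/\alpha))$ i.i.d.\ draws from the representation, form the class of all majority votes of one hypothesis per drawn class, and argue via an AdaBoost-style re-weighting thought experiment (with the same union bound over rounds handling the adaptivity of the $\DDD_t$'s) that the majority vote is $\alpha$-good, giving $\size(\HhH)\le T\cdot\size(\HhH')$ as claimed. The only difference is that you cite the standard boosting error analysis, whereas the paper reproves it directly; the argument is correct.
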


\begin{proof}
Let $\CCC$ be a concept class, and let $(\HhH^1,\PPP^1)$ be a $(\frac{1}{4},\beta/T)$-probabilistic representation for $\CCC$ (where $T$ will be set later). By Claim \ref{claim:boostB}, such a representation exists with $\size(\HhH^1)\triangleq k_1 \leq \RepDim(\CCC)+\ln\ln(T/\beta)$. We use $\HhH^1$ and $\PPP^1$ to create an $(\alpha,\beta)$- probabilistic representation for $\CCC$. We begin with two notations:\\
\begin{enumerate}[itemindent=0pt, topsep=0pt, itemsep=5pt,align=left,leftmargin=*,
        labelsep=5pt]
  \item For $T$ hypotheses $h_1,\ldots,h_T$ we denote by $\maj_{h_1,\ldots,h_T}$ the majority hypothesis. That is, $\maj_{h_1,\ldots,h_T}(x)=1$ if and only if $|\{ h_i \; : \; h_i(x)=1  \}| \geq T/2$.
	\item For $T$ hypothesis classes $\HHH_1,\ldots,\HHH_T$ we denote\\ $\MAJ(\HHH_1,\ldots,\HHH_T)=\Big\{  \maj_{h_1,\ldots,h_T} \; : \; \forall_{1\leq i \leq T} \; h_i\in \HHH_i  \Big\}$.\\
\end{enumerate}
Consider the following family of hypothesis classes:
$$\HhH=\bigg\{   \MAJ(\HHH_{i_1},\ldots,\HHH_{i_T}) \; : \; \HHH_{i_1},\ldots,\HHH_{i_T} \in \HhH^1  \bigg\}.$$
Moreover, denote the distribution on $\HhH$ induced by the following random process as $\PPP$:
$$\boxed{
\begin{array}{l}
\text{For } j=1, \ldots, T:\\
\hspace{5 mm} \text{Randomly choose } \HHH_{i_j}\in_{\PPP^1} \HhH^1\\
\text{Return } \MAJ(\HHH_{i_1},\ldots,\HHH_{i_T}).\\
\end{array}}$$
Next we show that $(\HhH,\PPP)$ is an $(\alpha,\beta)$-probabilistic representation for $\CCC$: For a fixed pair of a target concept $c$ and a distribution $\DDD$, randomly choose $\HHH_{i_1},\ldots,\HHH_{i_T} \in_{\PPP^1} \HhH^1$. We now show that with probability at least $(1-\beta)$ the set $\MAJ(\HHH_{i_1},\ldots,\HHH_{i_T})$ contains at least one $\alpha$-good hypothesis for $c,\DDD$.

To this end, denote $\DDD_1=\DDD$ and consider the following thought experiment, inspired by the Adaboost Algorithm of \cite{Adaboost}:
\begin{center}
\noindent\fbox{
\parbox{25pc}{
For $t=1.\ldots,T$:\\

\noindent\begin{enumerate}[itemindent=3pt, topsep=0pt, itemsep=0pt,align=left,leftmargin=*,
        labelsep=3pt]
  \item Fail if $\HHH_{i_t}$ does not contain a $\frac{1}{4}$-good hypothesis for $c,\DDD_t$.
	\item Denote by $h_t \in \HHH_{i_t}$ a $\frac{1}{4}$-good hypothesis for $c,\DDD_t$.
	\item $\DDD_{t+1}(x)=\begin{cases}
    2 \DDD_t(x),\;\; \text{if } h_t(x) \neq c(x).\\
    \left(1-\frac{\error_{\DDD_t}(c,h_t)}{1-\error_{\DDD_t}(c,h_t)} \right) \DDD_t(x), \;\; \text{otherwise}.
  \end{cases}$
\end{enumerate}}}\end{center}

\noindent Note that as $\DDD_1$ is a probability distribution on $X_d$; the same is true for $\DDD_2,\DDD_3,\ldots,\DDD_T$. As $(\HhH^1,\PPP^1)$ is a $(\frac{1}{4},\beta/T)$-probabilistic representation for $\CCC$, the failure probability of every iteration is at most $\beta/T$. Thus (using the union bound), with probability at least $(1-\beta)$ the whole thought experiment will succeed, and in this case we show that the error of $h_{\rm fin}=\maj_{h_1,\ldots,h_T}$ is at most $\alpha$.

Consider the set $R=\{x \; : \; h_{\rm fin}(x) \neq c(x) \} \subseteq X_d$. This is the set of points on which at least $T/2$ of $h_1,\ldots,h_T$ err. Next consider the partition of $R$ to the following sets:
$$R_t=\left\{x\in R \; : \; \big( h_t(x) \neq c(x) \big) \wedge \big( \forall_{i>t} \, h_i(x)=c(x) \big)  \right\}.$$
That is, $R_t$ contains the points $x\in R$ on which $h_t$ is last to err. Clearly $\DDD_t(R_t) \leq 1/4$, as $R_t$ is a subset of the set of points on which $h_t$ errs. Moreover,
\begin{eqnarray*}
\DDD_t(R_t) &\geq& \DDD_1(R_t) \cdot 2^{T/2} \cdot \left(1-\frac{\error_{\DDD_t}(c,h_t)}{1-\error_{\DDD_t}(c,h_t)}\right)^{t-T/2} \\
&\geq& \DDD_1(R_t) \cdot 2^{T/2} \cdot  \left(1-\frac{1/4}{1-1/4}\right)^{t-T/2} \\
&\geq& \DDD_1(R_t) \cdot 2^{T/2} \cdot  \left(1-\frac{1/4}{1-1/4}\right)^{T/2} \\
&=& \DDD(R_t) \cdot \left(\frac{4}{3}\right)^{T/2},
\end{eqnarray*}
so,
$$\DDD(R_t) \leq \DDD_t(R_t) \cdot \left(\frac{4}{3}\right)^{-T/2} \leq \frac{1}{4} \cdot \left(\frac{4}{3}\right)^{-T/2}.$$
Finally, \begin{eqnarray*}
&&\error_\DDD (c,h_{\rm fin})=\DDD(R)=\sum_{t=T/2}^T{\DDD(R_t)}\leq \\
&&\leq \frac{T}{2} \cdot \frac{1}{4} \cdot \left(\frac{4}{3}\right)^{-T/2}=\frac{T}{8} \cdot \left(\frac{4}{3}\right)^{-T/2}.
\end{eqnarray*}
Choosing $T=14 \ln(\frac{2}{\alpha})$, we get that $\error_\DDD (c,h_{\rm fin}) \leq \alpha$.
\remove{$$
\error_\DDD (c,h_{\rm fin}) \leq \frac{T}{8} \cdot \left(\frac{4}{3}\right)^{-T/2} = 
\frac{14}{8} \cdot \ln(\frac{2}{\alpha}) \cdot \left(\frac{4}{3}\right)^{-7 \ln(\frac{2}{\alpha})} = 
\frac{14}{8} \cdot \ln(\frac{2}{\alpha}) \cdot \left( \left( \left(\frac{4}{3}\right)^{3.5} \right)^{-\ln(\frac{2}{\alpha})} \right)^2 \leq$$
$$
\leq \frac{14}{8} \cdot \ln(\frac{2}{\alpha}) \cdot \left( e^{-\ln(\frac{2}{\alpha})} \right)^2 =
\frac{14}{8} \cdot \ln(\frac{2}{\alpha}) \cdot \left( \frac{\alpha}{2} \right)^2 \leq
\frac{14}{8} \cdot (\frac{2}{\alpha}) \cdot \left( \frac{\alpha}{2} \right)^2 = \frac{7}{8}\alpha \leq \alpha
$$}
Hence, $(\HhH,\PPP)$ is an $(\alpha,\beta)$-probabilistic representation for $\CCC$. Moreover, for every $\HHH_i \in \HhH$ we have that $|\HHH_i| \leq \left(e^{k_1}\right)^T$, and so
\begin{eqnarray*}
&&\size(\HhH) \leq k_1 \cdot T \leq \big( \RepDim(\CCC)+\ln\ln(T/\beta) \big)T \\
&&= O\Big(    \ln(\frac{1}{\alpha}) \cdot \big(  \RepDim(\CCC) + \ln\ln\ln(\frac{1}{\alpha}) + \ln\ln(\frac{1}{\beta})  \big)  \Big).
\end{eqnarray*}
\end{proof}
The next theorem states the main result of this section -- $\RepDim$ characterizes the sample complexity of private learning.

\begin{theorem}\label{thm:RepDim}
Let $\CCC$ be a concept class. $\widetilde{\Theta}_{\beta}\left(\frac{\RepDim(\CCC)}{\alpha\epsilon}\right)$ samples are necessary and sufficient for the private learning of the class $\CCC$.
\end{theorem}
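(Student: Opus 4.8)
The plan is to assemble \thmref{RepDim} by chaining the three lemmas already established: \lemref{equivalence2} (a probabilistic representation yields a private learner), \lemref{noParams} (accuracy/confidence boosting of a probabilistic representation), and \lemref{equivalence1} (a private learner yields a $(\frac14,\frac14)$-probabilistic representation of small size).

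\emph{Sufficiency.} I would start from the definition of $\RepDim(\CCC)$: there is a pair $(\HhH^0,\PPP^0)$ that $(\frac14,\frac14)$-probabilistically represents $\CCC$ with $\size(\HhH^0)=\RepDim(\CCC)$. Invoke \lemref{noParams} with target parameters $\alpha/6$ and $\beta/4$ to obtain $(\HhH,\PPP)$ that $(\alpha/6,\beta/4)$-probabilistically represents $\CCC$ with $\size(\HhH)=O\big(\ln(\tfrac1\alpha)\cdot(\RepDim(\CCC)+\ln\ln\ln(\tfrac1\alpha)+\ln\ln(\tfrac1\beta))\big)$. Then feed $(\HhH,\PPP)$ into \lemref{equivalence2} with accuracy $\alpha/6$ and confidence $\beta/4$: its conclusion is an algorithm that $(6\cdot\tfrac\alpha6,\,4\cdot\tfrac\beta4,\,\epsilon)=(\alpha,\beta,\epsilon)$-PPAC learns $\CCC$ with sample size $m=O\big(\tfrac{1}{\alpha\epsilon}(\size(\HhH)+\ln\tfrac1\beta)\big)=\widetilde{O}_{\beta}\big(\tfrac{\RepDim(\CCC)}{\alpha\epsilon}\big)$, the tilde absorbing the $\ln(1/\alpha)$ factor together with the iterated-logarithm terms, and the subscript $\beta$ absorbing the dependence on $\beta$.

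\emph{Necessity.} Fix $\alpha\le\frac14$ and $\beta\le\frac12$, and let $A$ be an $(\alpha,\beta,\epsilon)$-PPAC learner for $\CCC$ with sample size $m$; since $\beta\le\frac12$, $A$ is in particular an $(\alpha,\frac12,\epsilon)$-PPAC learner, so \lemref{equivalence1} applies and produces a $(\frac14,\frac14)$-probabilistic representation of $\CCC$ with $\size(\HhH)=O(m\epsilon\alpha)$. By the definition of $\RepDim$ this gives $\RepDim(\CCC)\le\size(\HhH)=O(m\epsilon\alpha)$, i.e.\ $m=\Omega\big(\RepDim(\CCC)/(\alpha\epsilon)\big)$. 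Combining the two directions yields the claimed $\widetilde{\Theta}_{\beta}\big(\RepDim(\CCC)/(\alpha\epsilon)\big)$.

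\emph{Main obstacle.} Essentially no real work remains at this point — it is all packed into \lemref{noParams} and \lemref{equivalence1}. The only points needing attention are the bookkeeping of constants (choosing $\alpha/6$ and $\beta/4$ in the two upper-bound lemmas so that the $6\alpha,4\beta$ blow-up of \lemref{equivalence2} is exactly cancelled) and the observation that the multiplicative $\ln(1/\alpha)$ overhead in the upper bound is \emph{not} matched by the lower bound, so that the two sides agree only up to the polylogarithmic slack hidden by $\widetilde{\Theta}$ (and up to the $\beta$-dependence hidden by the subscript).
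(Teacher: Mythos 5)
Your proposal is correct and follows essentially the same route as the paper's proof: the upper bound is obtained by applying \lemref{noParams} with parameters $\alpha/6,\beta/4$ and then \lemref{equivalence2}, and the lower bound by applying \lemref{equivalence1} (noting $\beta\leq 1/2$) and the definition of $\RepDim$, exactly as done in the paper. The parameter bookkeeping and the observation about the $\ln(1/\alpha)$ slack absorbed by the $\widetilde{\Theta}$ notation also match the paper's treatment.
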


\begin{proof}
Fix some $\alpha\leq 1/4,\beta\leq 1/2$, and $\epsilon$. By Lemma \ref{lem:noParams}, there exists a pair $(\HhH,\PPP)$ that $(\frac{\alpha}{6},\frac{\beta}{4})$-represent class $\CCC$, where 
$\size(\HhH) = O\Big(    \ln(1/\alpha) \cdot \big(  \RepDim(\CCC) + \ln\ln\ln(1/\alpha) + \ln\ln(1/\beta)  \big)  \Big)$.
Therefore, by Lemma \ref{lem:equivalence2}, there exists an algorithm $A$ that $(\alpha,\beta,\epsilon)$-PPAC learns the class $\CCC$ with a sample size 
$$m=O_{\beta}\left( \frac{1}{\alpha\epsilon}\ln(\frac{1}{\alpha})\cdot\left(  \RepDim(\CCC)+\ln\ln\ln(\frac{1}{\alpha})  \right) \right).$$

For the lower bound, let $A$ be an $(\alpha,\beta,\epsilon)$-PPAC learner for the class $\CCC$ with a sample size $m$, where $\alpha\leq 1/4$ and $\beta\leq 1/2$. By Lemma \ref{lem:equivalence1}, there exists an $(\HhH,\PPP)$ that $(\frac{1}{4},\frac{1}{4})$- probabilistically represents the class $\CCC$ and $\size(\HhH)=\ln(8)+\ln\ln(4)+ 8\alpha\epsilon m$.
Therefore, by definition, $\RepDim(\CCC)\leq\ln(8\ln(4))+ 8\alpha\epsilon m$. Thus,
$$m \geq \frac{1}{8\alpha\epsilon} \cdot \big( \RepDim(\CCC) - \ln(8\ln(4)) \big)=\Omega\left( \frac{\RepDim(\CCC)}{\alpha\epsilon} \right).$$
\end{proof}

\section{From a Probabilistic Representation to a Deterministic Representation}\label{sec:relationships}
In this section we will establish a connection between the (probabilistic) representation dimension of a class and its deterministic representation dimension.

\begin{observation}\label{obs:uniteH}
Let $(\HhH,\PPP)$ be an $(\alpha,\beta)$-probabilistic representation for a concept class $\CCC$. Then, $\BBB=\bigcup_{\HHH_i\in\HhH}\HHH_i$ is an $\alpha$-representation of $\CCC$.
\end{observation}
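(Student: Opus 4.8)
The plan is to unpack the definitions and observe that an averaging (or pigeonhole) argument suffices, with no quantitative loss. Fix an arbitrary target concept $c\in\CCC$ and an arbitrary distribution $\DDD$ on $X_d$; the goal is to exhibit a single hypothesis $h\in\BBB$ with $\error_\DDD(c,h)\leq\alpha$.

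First I would apply \defref{prep} to the pair $(\HhH,\PPP)$ for this particular $c$ and $\DDD$: by hypothesis,
$$\Pr_{\HHH_i\in_\PPP\HhH}\left[\exists h\in\HHH_i \text{ s.t. } \error_\DDD(c,h)\leq\alpha\right]\geq 1-\beta.$$
Since $\beta<1$ (indeed $\beta\leq 1/2$ in all cases of interest), this probability is strictly positive, so the event is nonempty: there exists at least one index $i^\*\in\{1,\dots,r\}$ in the support of $\PPP$ for which $\HHH_{i^\*}$ contains a hypothesis $h$ with $\error_\DDD(c,h)\leq\alpha$.

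Then I would simply note that $h\in\HHH_{i^\*}\subseteq\bigcup_{\HHH_i\in\HhH}\HHH_i=\BBB$, so $\BBB$ contains an $\alpha$-good hypothesis for $c$ and $\DDD$. As $c$ and $\DDD$ were arbitrary, this is exactly the defining property of an $\alpha$-representation (\defref{rep}), completing the argument. There is no real obstacle here — the only point to be careful about is that the quantifier order in \defref{prep} (``for every $c$ and $\DDD$, with high probability over $\HHH_i$ \ldots'') still lets us extract, for each fixed $(c,\DDD)$, one good class and hence one good hypothesis; the parameter $\beta$ plays no role beyond guaranteeing the relevant event is nonempty, which is why the resulting deterministic representation retains the same accuracy $\alpha$.
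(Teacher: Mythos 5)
Your argument is correct and is essentially identical to the paper's own proof: both fix $c$ and $\DDD$, note that the defining probability $1-\beta$ is strictly positive (implicitly assuming $\beta<1$, as the paper also does), extract one class $\HHH_i$ containing an $\alpha$-good hypothesis, and conclude that this hypothesis lies in the union $\BBB$. Nothing further is needed.
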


\begin{proof}
As $(\HhH,\PPP)$ is an $(\alpha,\beta)$-probabilistic representation for $\CCC$, for every $c$ and every $\DDD$
$$\Pr_{\PPP}[\exists h \in \HHH_i \;\; s.t \;\; \error_\DDD(c,h)\leq \alpha]\geq 1-\beta>0.$$
The probability is over choosing a set $\HHH_i\in_{\PPP}\HhH$. In particular, for every $c$ and every $\DDD$ there exists an $\HHH_i\in\HhH$ that contains an $\alpha$-good hypothesis.
\end{proof}

The simple construction in Observation~\ref{obs:uniteH} may result in a very large deterministic representation. For example, in Claim \ref{claim:pointEff} we show an $(\HhH,\PPP)$ that $(\alpha,\beta)$- probabilistically represents the class $\point_d$, where $\HhH$ contains all the sets of at most $\frac{4}{\alpha}\ln(\frac{1}{\beta})$ boolean functions. While $\bigcup_{\HHH_i\in\HhH}\HHH_i = 2^{X_d}$ is indeed an $\alpha$-representation for $\point_d$, it is extremely over-sized.

We will show that it is not necessary to take the union of all the $\HHH_i$'s in $\HhH$ in order to get an $\alpha$-representation for $\CCC$. As $(\HhH,\PPP)$ is an $(\alpha,\beta)$-probabilistic representation, for every $c$ and every $\DDD$, with probability at least $1-\beta$ a randomly chosen $\HHH_i \in_{\PPP} \HhH$ contains an $\alpha$-good hypothesis. The straight forward strategy here is to first boost $\beta$ as in Claim \ref{claim:boostB}, and then use the union bound over all possible $c\in \CCC$ and over all possible distributions $\DDD$ on $X_d$. Unfortunately, there are infinitely many such distributions, and the proof will be somewhat more complicated.

\begin{definition}\label{def:commit}
Let $\HhH=\{\HHH_1,\HHH_2,\ldots,\HHH_r\}$ be a family of hypothesis classes, and $\PPP$ be a distribution over $\{1,\ldots,r\}$. We will denote the following non private algorithm as  $Learner(\HhH,\PPP,m,\gamma)$:
$$\boxed{
\begin{array}{l}
\text{Input: a sample } S=(x_i,y_i)_{i=1}^m.\\
{\begin{array}{ll}
1. & \text{Randomly choose } \HHH_i \in_\PPP \HhH .\\
2. & \text{If for every } h \in \HHH_i \; \error_S(h) > \gamma \text{, then fail.} \\
3. & \text{Return } h \in \HHH_i \text{ minimizing } \error_S(h). \\\end{array}}\\
\end{array}}$$
We will say that $Learner(\HhH,\PPP,m,\gamma)$ is {\em $\beta$-successful} for a class $\CCC$ over $X_d$, if for every $c\in\CCC$ and every distribution $\DDD$ on $X_d$, given an input sample drawn i.i.d. according to $\DDD$ and labeled by $c$, algorithm $Learner$ fails with probability at most $\beta$.
\end{definition}

\begin{claim}\label{claim:shrinkClaimA}
If $(\HhH,\PPP)$ is an $(\alpha,\beta)$-probabilistic representation for a class $\CCC$, then, for $m\geq\frac{3}{\alpha}\ln(1/\beta)$, algorithm $Learner(\HhH,\PPP,m,2\alpha)$ is $2\beta$-successful for $\CCC$.
\end{claim}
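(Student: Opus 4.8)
The plan is to show that when $(\HhH,\PPP)$ is an $(\alpha,\beta)$-probabilistic representation and $m$ is large enough, the two failure modes of $Learner(\HhH,\PPP,m,2\alpha)$ — the randomly chosen $\HHH_i$ contains no hypothesis that is actually good under $\DDD$, or it does contain such a hypothesis but that hypothesis has large \emph{empirical} error so step~2 trips — each happen with probability at most $\beta$, for a total of $2\beta$ by the union bound. Fix an arbitrary $c\in\CCC$ and distribution $\DDD$ on $X_d$, and let $S$ be a sample of $m$ i.i.d.\ draws from $\DDD$ labeled by $c$.

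First I would bound the ``representation failure'' event. By \defref{prep}, with probability at least $1-\beta$ over the choice $\HHH_i\in_\PPP\HhH$ there exists $h\in\HHH_i$ with $\error_\DDD(c,h)\le\alpha$. Condition on this event; fix such an $h$. Now I would apply the multiplicative Chernoff bound (first inequality of Section~\ref{bounds}) to the indicator variables $X_j=\mathbbm{1}[h(x_j)\ne c(x_j)]$, which have mean $p=\error_\DDD(c,h)\le\alpha$: the probability that $\sum_j X_j > 2\alpha m$ is at most $\exp(-m p (1)^2/3)$ — more carefully, taking $\delta$ so that $(1+\delta)p = 2\alpha$, and using that $p\le\alpha$ forces $\delta\ge 1$ (and monotonicity of the bound), one gets $\Pr[\error_S(h) > 2\alpha] \le \exp(-m\alpha/3)$. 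For $m\ge\frac{3}{\alpha}\ln(1/\beta)$ this is at most $\beta$. So, except with probability $\beta$ (over the sample) the witness $h$ from step~1's class has $\error_S(h)\le 2\alpha$, meaning step~2 does not fail.

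Combining: $Learner$ fails only if either the chosen $\HHH_i$ had no $\alpha$-good-under-$\DDD$ hypothesis (probability $\le\beta$ over step~1), or it had one but that hypothesis' empirical error exceeded $2\alpha$ (probability $\le\beta$ over the sample, by the Chernoff estimate above). By the union bound the total failure probability is at most $2\beta$. Since $c$ and $\DDD$ were arbitrary, $Learner(\HhH,\PPP,m,2\alpha)$ is $2\beta$-successful for $\CCC$. I do not expect any real obstacle here; the only point requiring a little care is the Chernoff application, namely checking that $\error_\DDD(c,h)\le\alpha$ indeed yields the clean bound $\exp(-m\alpha/3)$ on $\Pr[\error_S(h)>2\alpha]$ uniformly in the actual value of $\error_\DDD(c,h)$ — this is exactly the estimate already used in the proof of \lemref{equivalence2} (event $E_1$), so it can be invoked verbatim.
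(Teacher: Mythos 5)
Your proposal is correct and follows essentially the same argument as the paper: invoke the probabilistic representation to get, with probability $1-\beta$, an $\alpha$-good hypothesis in the chosen $\HHH_i$, then use the multiplicative Chernoff bound to show its empirical error exceeds $2\alpha$ with probability at most $\exp(-m\alpha/3)\leq\beta$, and combine the two failure probabilities (the paper multiplies the success probabilities, you take a union bound — the same estimate).
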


\begin{proof}
We will show that with probability at least $1-2\beta$, the set $\HHH_i$ (chosen in Step~1) contains at least one hypothesis $h$ s.t.\ $\error_S(h)\leq2\alpha$. As $(\HhH,\PPP)$ is an $(\alpha,\beta)$-probabilistic representation for class $\CCC$, the chosen $\HHH_i$ will contain a hypothesis $h$ s.t. $\error_{\DDD}(c,h)\leq\alpha$ with probability at least $1-\beta$; by the Chernoff bound with probability at least $1-\exp(-m\alpha/3)$ this hypothesis has empirical error at most $2\alpha$. 
The set $\HHH_i$ contains a hypothesis $h$ s.t. $\error_S(h)\leq2\alpha$ with probability at least $(1-\beta)(1-\exp(-m\alpha/3))>1-(\beta+\exp(-m\alpha/3))$, which is at least $(1-2\beta)$ for $m\geq\frac{3}{\alpha}\ln(1/\beta)$.
\end{proof}

\begin{claim}\label{claim:shrinkClaimB}
Let $\HhH$ be a family of hypothesis classes, and $\PPP$ a distribution on it.
Let $\gamma,\beta$ and $m$ be such that $m\geq\frac{4}{\gamma} (\size(\HhH)+\ln(\frac{1}{\beta}))$.
If $Learner(\HhH,\PPP,m,\gamma)$ is $\beta$-successful for a class $\CCC$ over $X_d$,
then there exists $\widehat{\HhH} \subseteq \HhH$ and a distribution $\widehat{\PPP}$ on it, s.t. $Learner(\widehat{\HhH},\widehat{\PPP},m,\gamma)$ is a $(2\gamma,3\beta)$-PAC learner for $\CCC$ and $\left|\widehat{\HhH}\right| = \frac{d \cdot m}{\beta^2}$.
\end{claim}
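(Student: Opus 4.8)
The plan is to charge the two ways in which $Learner(\widehat{\HhH},\widehat{\PPP},m,\gamma)$ can err on a target pair $(c,\DDD)$ separately. It can \emph{fail} in Step~2 (the class $\HHH_i$ drawn in Step~1 contains no hypothesis with empirical error $\le\gamma$), or it can get past Step~2 and return a hypothesis $h$ with $\error_S(h)\le\gamma$ whose generalization error $\error_\DDD(c,h)$ is nonetheless larger than $2\gamma$. I would dispatch the second ``generalization'' failure first, observing that it is bounded by $\beta$ for \emph{every} choice of subfamily and sub-distribution: conditioning on the drawn class $\HHH_i\in\widehat{\HhH}\subseteq\HhH$ (so $|\HHH_i|\le e^{\size(\HhH)}$), any fixed $h$ with $\error_\DDD(c,h)>2\gamma$ has $\error_S(h)\le\gamma$ with probability at most $\exp(-\gamma m/4)$ by the multiplicative Chernoff bound, so a union bound over $h\in\HHH_i$ together with the hypothesis $m\ge\frac{4}{\gamma}(\size(\HhH)+\ln(1/\beta))$ gives probability at most $e^{\size(\HhH)-\gamma m/4}\le\beta$, and averaging over $\HHH_i$ preserves this. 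Hence it remains only to pick $\widehat{\HhH}$ and $\widehat{\PPP}$ so that the Step~2 failure probability is at most $2\beta$ for every $(c,\DDD)$; the total error is then at most $3\beta$.

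For the Step~2 failure, the crucial point — and the way around the difficulty the text flags just before the claim — is that whether Step~2 fails depends only on the \emph{finite} sample $S\in(X_d\times\{0,1\})^m$ and on the drawn class, never directly on the distribution $\DDD$. So I would call a hypothesis class $\HHH$ \emph{$S$-bad} if $\error_S(h)>\gamma$ for every $h\in\HHH$, set $\phi(S)=\Pr_{\HHH\in_{\PPP}\HhH}[\HHH\text{ is }S\text{-bad}]$, and note that $\beta$-successfulness of $Learner(\HhH,\PPP,m,\gamma)$ says precisely that $\E_S[\phi(S)]\le\beta$ whenever $S$ consists of $m$ i.i.d.\ points from any $\DDD$ labeled by any $c\in\CCC$. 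I would then draw $N=\frac{dm}{\beta^2}$ classes i.i.d.\ from $\HhH$ according to $\PPP$ to form $\widehat{\HhH}$, take $\widehat{\PPP}$ uniform on them, and let $\widehat{\phi}(S)$ be the empirical fraction of them that are $S$-bad. For each fixed $S$ the corresponding indicators are i.i.d.\ with mean $\phi(S)$, so Hoeffding gives $\Pr[\widehat{\phi}(S)>\phi(S)+\beta]\le e^{-2N\beta^2}$; since there are only $(2^{d+1})^m$ possible samples $S$, a union bound over them — this is exactly where the choice $N=\frac{dm}{\beta^2}$ enters, since $(d+1)m\ln 2<2dm$ for $d\ge1$ — shows that with positive probability a single draw achieves $\widehat{\phi}(S)\le\phi(S)+\beta$ for \emph{all} $S$ simultaneously. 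I fix such a draw as $\widehat{\HhH}$.

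For this $\widehat{\HhH}$ and $\widehat{\PPP}$, and any $(c,\DDD)$, the Step~2 failure probability of $Learner(\widehat{\HhH},\widehat{\PPP},m,\gamma)$ on $m$ i.i.d.\ points from $\DDD$ labeled by $c$ equals $\E_S[\widehat{\phi}(S)]\le\E_S[\phi(S)]+\beta\le 2\beta$; combined with the generalization bound this yields the claimed $(2\gamma,3\beta)$-PAC guarantee, and $|\widehat{\HhH}|=N=\frac{dm}{\beta^2}$ by construction (allowing $\widehat{\HhH}$ to be a multiset, or merging repeated draws and re-weighting $\widehat{\PPP}$, which only decreases its size). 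The step I expect to be the main obstacle is precisely the one the text anticipates: one cannot union-bound over the infinitely many distributions $\DDD$. The decoupling above is what resolves it — the failure event is a function of the finite sample $S$ alone, so the union bound ranges over the $(2^{d+1})^m$ samples rather than over distributions, while the generalization step needs no union bound over distributions at all because it holds for every subfamily uniformly.
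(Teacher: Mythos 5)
Your proposal is correct and follows essentially the same route as the paper: it defines the per-sample Step~2 failure probability (the paper's $p_S$, your $\phi(S)$), samples $t=\frac{dm}{\beta^2}$ classes i.i.d.\ from $\PPP$, applies Hoeffding plus a union bound over the at most $2^{m(d+1)}$ samples to fix a subfamily whose empirical failure rate exceeds $p_S$ by at most $\beta$ on every $S$, and then adds the same Chernoff-plus-union-bound generalization event to reach $(2\gamma,3\beta)$. The only differences are cosmetic (one-sided rather than two-sided Hoeffding, and an explicit remark about multiset repetitions), so no further comparison is needed.
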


\begin{proof}
For every input $S=(x_i,y_i)_{i=1}^m$, denote by $p_S$ the probability of $Learner(\HhH,\PPP,m,\gamma)$ failing on step 2 (the probability is only over the choice of $\HHH_i \in_\PPP \HhH$ in the first step).
As $Learner(\HhH,\PPP,m,\gamma)$ is $\beta$-successful,
$$\Pr_{\PPP,\DDD}\big[Learner(\HhH,\PPP,m,\gamma) \text{ fails}\big]=\sum_S{\Pr_\DDD[S] \cdot p_S} \leq \beta.$$
Consider the following process, denoted by Proc, for randomly choosing a multiset $\widetilde{\HhH}$ of size $t$ ($t$ will be set later):
$$\boxed{
\begin{array}{l}
\text{For } i=1, \ldots, t:\\
\hspace{5 mm} \text{Randomly choose } \HHH_i\in_\PPP \HhH\\
\text{Return } \widetilde{\HhH}=(\HHH_1,\HHH_2,...,\HHH_t).\\
\end{array}}$$
Denote by $\UUU_t$ the uniform distribution on $\{1,2,\ldots,t\}$. As before, for every input $S=(x_i,y_i)_{i=1}^m$, denote by $\widetilde{p_S}$ the probability of $Learner(\widetilde{\HhH},\UUU_t,m,\gamma)$ failing on its second step (again, the probability is only over the choice of $\HHH_i \in_{\UUU_t} \widetilde{\HhH}$ in the first step). Using those notations:
$$\Pr_{\UUU_t,\DDD}\big[Learner(\widetilde{\HhH},\UUU_t,m,\gamma) \text{ fails}\big]=\sum_S{\Pr_\DDD[S] \cdot \widetilde{p_S}}.$$
Fix a sample $S$. As the choice of $\HHH_i \in_{\UUU_t}\widetilde{\HhH}$ is uniform,
$$\widetilde{p_S}=\frac{\left|  \left\{\HHH_i\in\widetilde{\HhH} \; : \; \forall h\in \HHH_i \; \error_S(h)>\gamma \right\} \right|}{\left|\widetilde{\HhH}\right|}.$$
Using the Hoeffding bound,
$$\Pr_{Proc}\bigg[ \left|\widetilde{p_S} - p_S\right| \geq \beta \bigg] \leq 2 e^{-2t\beta^2}.$$
The probability is over choosing the multiset $\widetilde{\HhH}$.
There are at most $2^{m(d+1)}$ samples of size $m$
(as every entry in the sample is an element of $X_d$, concatenated with a label bit).
Using the union bound over all possible samples $S$,
$$\Pr_{Proc}\bigg[ \exists S \; s.t. \; \left|\widetilde{p_S} - p_S\right| \geq \beta \bigg] \leq 2^{m(d+1)} \cdot 2 \cdot e^{-2t\beta^2}.$$
For $t\geq \frac{m \cdot d}{\beta^2}$ the above probability is strictly less than 1. This means that for $t = \frac{m \cdot d}{\beta^2}$ there exists a multiset $\widehat{\HhH}$ such that $\left|\widehat{p_S} - p_S\right| \leq \beta$ for every sample $S$.
We will show that for this $\widehat{\HhH}$, $Learner(\widehat{\HhH},\UUU_t,m,\gamma)$ is a $(2\gamma,3\beta)$-PAC learner.
Fix a target concept $c\in\CCC$ and a distribution $\DDD$ on $X_d$. Define the following two good events:
\begin{enumerate}[label=$E_{\arabic*}$]
\item $Learner(\widehat{\HhH},\UUU_t,m,\gamma)$ outputs a hypothesis $h$ such that $\error_S(h)\leq\gamma$.
\item For every $h\in\HHH_i$ s.t. $\error_S(h)\leq\gamma$, it holds that $\error_{\DDD}(c,h)\leq2\gamma$.
\end{enumerate}
Note that if those two events happen, $Learner(\widehat{\HhH},\UUU_t,m,\gamma)$ returns a $2\gamma$-good hypothesis for $c$ and $\DDD$. We will show that those two events happen with high probability. We start by bounding the failure probability of $Learner(\widehat{\HhH},\UUU_t,m,\gamma)$.
\begin{eqnarray*}
\lefteqn{\Pr_{\UUU_t,\DDD}\big[Learner(\widehat{\HhH},\UUU_t,m,\gamma) \text{ fails}\big]} \\
&= &\sum_S{\Pr_\DDD[S] \cdot \widehat{p_S}} \\
&\leq &  \sum_S{\Pr_\DDD[S] \cdot (p_S+\beta)} \\
&= & \Pr_{\PPP,\DDD}\big[Learner(\HhH,\PPP,m,\gamma) \text{ fails}\big]+\beta \leq 2\beta.
\end{eqnarray*}
When $Learner(\widehat{\HhH},\UUU_t,m,\gamma)$ does not fail, it returns a hypothesis $h$
with empirical error at most $\gamma$. Thus, $\Pr[E_1]\geq1-2\beta$.

Using the Chernoff bound, the probability that a hypothesis $h$ with $\error_{\DDD}(c,h)>2\gamma$ has empirical error $\leq\gamma$ is less than $\exp(-m\gamma/4)$. Using the union bound, the probability that there is such a hypothesis in $\HHH_i$ is at most $|\HHH_i|\cdot\exp(-m\gamma/4)$. Therefore, $\Pr[ E_2 ]\geq1-|\HHH_i|\cdot\exp(-m\gamma/4)$. For $m\geq\frac{4}{\gamma} \ln(\frac{|\HHH_i|}{\beta})$, this probability is at least $(1-\beta)$.

All in all, the probability of $Learner(\HhH,\PPP,m,\gamma)$ failing to output a $2\gamma$-good hypothesis is at most $3\beta$.
\end{proof}

\begin{theorem}\label{thm:shrink}
If there exists a pair $(\HhH,\PPP)$ that $(\alpha,\beta)$-probabilistically represents a class $\CCC$ over $X_d$ (where $|\HhH|$ might be very big), then there exists a pair $(\widehat{\HhH},\widehat{\PPP})$ that $(4\alpha,6\beta)$-probabilistically represents $\CCC$, where $\widehat{\HhH}\subseteq\HhH$ , and
$$\left|\widehat{\HhH}\right| = \frac{3d}{4\alpha\beta^2}\left( \size(\HhH) + \ln(\frac{1}{\beta}) \right).$$
\end{theorem}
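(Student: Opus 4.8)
The plan is to chain together the two claims established just above. Concretely, starting from the given $(\alpha,\beta)$-probabilistic representation $(\HhH,\PPP)$ for $\CCC$, I would first apply Claim~\ref{claim:shrinkClaimA} with parameter $m$ set to the threshold $m=\lceil\frac{3}{\alpha}\ln(1/\beta)\rceil$ (we may as well take $m$ to be exactly this value, since all the statements are monotone in $m$). This gives that $Learner(\HhH,\PPP,m,2\alpha)$ is $2\beta$-successful for $\CCC$. Next I would apply Claim~\ref{claim:shrinkClaimB} with $\gamma\gets 2\alpha$ and with the success parameter $\gets 2\beta$; its hypothesis $m\geq\frac{4}{\gamma}(\size(\HhH)+\ln(\frac1\beta))$ translates here into $m\geq\frac{2}{\alpha}(\size(\HhH)+\ln(\frac{1}{2\beta}))$, so I would actually take $m=\max\{\frac{3}{\alpha}\ln(1/\beta),\ \frac{2}{\alpha}(\size(\HhH)+\ln(1/(2\beta)))\}$ so that both claims apply simultaneously. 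The conclusion of Claim~\ref{claim:shrinkClaimB} is a subfamily $\widehat{\HhH}\subseteq\HhH$ with $|\widehat{\HhH}|=\frac{d\cdot m}{(2\beta)^2}$ and a distribution $\widehat{\PPP}$ on it such that $Learner(\widehat{\HhH},\widehat{\PPP},m,2\alpha)$ is a $(4\alpha,6\beta)$-PAC learner for $\CCC$.

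The remaining step is to translate ``$Learner(\widehat{\HhH},\widehat{\PPP},m,2\alpha)$ is a $(4\alpha,6\beta)$-PAC learner'' back into ``$(\widehat{\HhH},\widehat{\PPP})$ is a $(4\alpha,6\beta)$-probabilistic representation.'' This is immediate from the definitions: for a fixed $c$ and $\DDD$, $Learner$ draws $m$ i.i.d.\ labeled samples, randomly picks $\HHH_i\in_{\widehat{\PPP}}\widehat{\HhH}$, and (when it does not fail) outputs some $h\in\HHH_i$ with $\error_{\DDD}(c,h)\leq 4\alpha$ with probability $\geq 1-6\beta$ over all its randomness. In particular, on any run where $Learner$ succeeds, the chosen $\HHH_i$ contains a $4\alpha$-good hypothesis for $c,\DDD$; hence the probability over the combined choice of sample and of $\HHH_i$ that the chosen $\HHH_i$ contains such a hypothesis is at least $1-6\beta$, so by averaging there is a fixing of the sample for which $\Pr_{\widehat{\PPP}}[\exists h\in\HHH_i: \error_{\DDD}(c,h)\leq 4\alpha]\geq 1-6\beta$. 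Since the event whose probability we are bounding does not depend on the sample at all, this shows $(\widehat{\HhH},\widehat{\PPP})$ is a $(4\alpha,6\beta)$-probabilistic representation for $\CCC$.

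Finally I would just record the bound on $|\widehat{\HhH}|$: plugging $m=\max\{\frac{3}{\alpha}\ln(1/\beta),\frac{2}{\alpha}(\size(\HhH)+\ln(1/(2\beta)))\}=O\big(\frac{1}{\alpha}(\size(\HhH)+\ln(1/\beta))\big)$ into $|\widehat{\HhH}|=\frac{d\cdot m}{4\beta^2}$ yields $|\widehat{\HhH}|=O\big(\frac{d}{\alpha\beta^2}(\size(\HhH)+\ln(1/\beta))\big)$, matching the stated $\frac{3d}{4\alpha\beta^2}(\size(\HhH)+\ln(1/\beta))$ up to the constant (and one can chase the exact constant by being slightly careful about which of the two terms in the $\max$ dominates, e.g.\ absorbing $\ln(1/\beta)$ into $\size(\HhH)+\ln(1/\beta)$). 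I expect essentially no real obstacle here — the theorem is a straightforward composition of Claims~\ref{claim:shrinkClaimA} and~\ref{claim:shrinkClaimB} — the only thing requiring a little care is bookkeeping the parameters so that the hypothesis $m\geq\frac{4}{\gamma}(\size(\HhH)+\ln(1/\beta))$ of Claim~\ref{claim:shrinkClaimB} is met while keeping $|\widehat{\HhH}|$ at the claimed magnitude, and noting that the "PAC learner" conclusion of Claim~\ref{claim:shrinkClaimB} is literally equivalent (for an algorithm of this syntactic form) to the "probabilistic representation" conclusion we want.
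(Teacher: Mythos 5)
Your proposal is correct and follows essentially the same route as the paper: set $m$ large enough for both hypotheses, apply Claim~\ref{claim:shrinkClaimA} and then Claim~\ref{claim:shrinkClaimB} with $\gamma=2\alpha$ and success parameter $2\beta$, and observe that for an algorithm of this syntactic form the $(4\alpha,6\beta)$-PAC guarantee is equivalent to $(\widehat{\HhH},\widehat{\PPP})$ being a $(4\alpha,6\beta)$-probabilistic representation (the paper phrases this last step as a short proof by contradiction, you argue it directly; either is fine). The only cosmetic difference is the choice of $m$: the paper takes $m=\frac{3}{\alpha}(\size(\HhH)+\ln(1/\beta))$, which satisfies both claims' hypotheses and reproduces the stated constant $\frac{3d}{4\alpha\beta^2}$ exactly, whereas your $\max$-based choice gives the same bound up to constants.
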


\begin{proof}
Let $(\HhH,\PPP)$ be an $(\alpha,\beta)$-probabilistic representation for a class $\CCC$.
Set $m=\frac{3}{\alpha}( \size(\HhH) + \ln(\frac{1}{\beta}) )$. By Claim \ref{claim:shrinkClaimA}, 
$Learner(\HhH,\PPP,m,2\alpha)$ is $2\beta$-successful for class $\CCC$. By Claim \ref{claim:shrinkClaimB}, there exists an $\widehat{\HhH} \subseteq \HhH$ and a distribution $\widehat{\PPP}$ on it, such that $Learner(\widehat{\HhH},\widehat{\PPP},m,2\alpha)$ is a $(4\alpha,6\beta)$-PAC learner for $\CCC$ and $\left|\widehat{\HhH}\right| = \frac{d \cdot m}{4\beta^2} = \frac{3d}{4\alpha\beta^2}( \size(\HhH) + \ln(\frac{1}{\beta}) ) $.

Assume towards contradiction that $(\widehat{\HhH},\widehat{\PPP})$ does not $(4\alpha,6\beta)$-represent $\CCC$. So, there exist a concept $c\in \CCC$ and a distribution $\DDD$ s.t., with probability strictly greater than $6\beta$, a randomly chosen $\HHH_i \in_{\widehat{\PPP}} {\widehat{\HhH}}$ does not contain a $4\alpha$-good hypothesis for $c,\DDD$. Therefore, for those $c$ and $\DDD$, $Learner(\widehat{\HhH},\widehat{\PPP},m,2\alpha)$ will fail to return a $4\alpha$-good hypothesis with probability strictly greater than $6\beta$.
\end{proof}

\begin{theorem}\label{thm:ABtoA}
For every class $\CCC$ over $X_d$ there exists a $\frac{1}{4}$-representation $\BBB$ such that
$\size(\BBB)=O( \ln(d) + \RepDim(\CCC))$.
\end{theorem}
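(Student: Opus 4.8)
The plan is to combine three results already established: the accuracy/confidence boosting of Lemma~\ref{lem:noParams}, the ``shrinking'' Theorem~\ref{thm:shrink} (which replaces an arbitrary probabilistic representation by one whose number of hypothesis classes is only polynomial in $d$ and in the size), and the union construction of Observation~\ref{obs:uniteH}. The naive route --- take an optimal $(\tfrac14,\tfrac14)$-probabilistic representation $(\HhH,\PPP)$ with $\size(\HhH)=\RepDim(\CCC)$ and output $\BBB=\bigcup_i\HHH_i$ --- does give a $\tfrac14$-representation by Observation~\ref{obs:uniteH}, but its size is uncontrolled because $|\HhH|$ may be astronomically large; Theorem~\ref{thm:shrink} is exactly the tool that tames $|\HhH|$.

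Concretely, I would start from a pair $(\HhH^0,\PPP^0)$ that $(\tfrac14,\tfrac14)$-probabilistically represents $\CCC$ with $\size(\HhH^0)=\RepDim(\CCC)$ (which exists by Definition~\ref{def:repdim}), and apply Lemma~\ref{lem:noParams} with a sufficiently small constant accuracy $\alpha_0$ and a sufficiently small constant confidence $\beta_0$ (say $\alpha_0=\beta_0=\tfrac{1}{100}$). Because $\ln(1/\alpha_0)$, $\ln\ln\ln(1/\alpha_0)$ and $\ln\ln(1/\beta_0)$ are all constants, this produces $(\HhH^1,\PPP^1)$ that $(\alpha_0,\beta_0)$-probabilistically represents $\CCC$ with $\size(\HhH^1)=O(\RepDim(\CCC)+1)$. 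Applying Theorem~\ref{thm:shrink} to $(\HhH^1,\PPP^1)$ then yields a pair $(\widehat\HhH,\widehat\PPP)$ that $(4\alpha_0,6\beta_0)$-probabilistically represents $\CCC$, with $\widehat\HhH\subseteq\HhH^1$ and
$$\bigl|\widehat\HhH\bigr|=\frac{3d}{4\alpha_0\beta_0^{2}}\Bigl(\size(\HhH^1)+\ln\tfrac{1}{\beta_0}\Bigr)=O\bigl(d\cdot(\RepDim(\CCC)+1)\bigr).$$
Since $4\alpha_0=\tfrac{1}{25}\le\tfrac14$ and $6\beta_0<1$, Observation~\ref{obs:uniteH} applied to $(\widehat\HhH,\widehat\PPP)$ shows that $\BBB=\bigcup_{\HHH_i\in\widehat\HhH}\HHH_i$ is a $\tfrac{1}{25}$-representation --- in particular a $\tfrac14$-representation --- of $\CCC$.

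Finally I would bound the size. From $|\BBB|\le\bigl|\widehat\HhH\bigr|\cdot\max_{\HHH_i\in\widehat\HhH}|\HHH_i|\le\bigl|\widehat\HhH\bigr|\cdot e^{\size(\widehat\HhH)}$ together with $\size(\widehat\HhH)\le\size(\HhH^1)$ (as $\widehat\HhH\subseteq\HhH^1$), I get
$$\size(\BBB)=\ln|\BBB|\le\ln\bigl|\widehat\HhH\bigr|+\size(\HhH^1)=O\bigl(\ln d+\ln(\RepDim(\CCC)+1)+\RepDim(\CCC)\bigr)=O(\ln d+\RepDim(\CCC)),$$
where the last step uses $\ln(x+1)\le x$ to absorb $\ln(\RepDim(\CCC)+1)$ into $\RepDim(\CCC)$. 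I do not expect a genuine obstacle here: all the heavy lifting is done by Lemma~\ref{lem:noParams} and Theorem~\ref{thm:shrink}, and the only things this argument adds are (i) choosing $\alpha_0,\beta_0$ so that the multiplicative blow-ups $4\alpha_0$ and $6\beta_0$ coming out of Theorem~\ref{thm:shrink} remain at most $\tfrac14$ and below $1$ respectively, and (ii) the routine size bookkeeping above (with the trivial case $d=1$, where the domain has only two points, handled separately).
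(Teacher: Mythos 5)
Your proposal is correct and follows essentially the same route as the paper's proof: boost via Lemma~\ref{lem:noParams} to small constant parameters, shrink the family via Theorem~\ref{thm:shrink} to $O(d\cdot\RepDim(\CCC))$ hypothesis classes, take the union via Observation~\ref{obs:uniteH}, and bound $\size(\BBB)$ by $\ln|\widehat\HhH|+\size(\HhH)$. The only difference is the choice of constants (the paper takes $\alpha=\tfrac{1}{16}$, $\beta=\tfrac{1}{12}$ so the shrunk representation is $(\tfrac14,\tfrac12)$, versus your $\alpha_0=\beta_0=\tfrac{1}{100}$), which is immaterial.
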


\begin{proof}
By Lemma \ref{lem:noParams}, there exists a pair $(\HhH,\PPP)$ that
$(\frac{1}{16},\frac{1}{12})$-probabilistically represents $\CCC$ such that $\size(\HhH)=O( \RepDim(\CCC) )$. Using
Theorem \ref{thm:shrink}, there exists a pair $(\widehat{\HhH},\widehat{\PPP})$ that $(\frac{1}{4},\frac{1}{2})$-probabilistically represents $\CCC$, such that
$\size(\widehat{\HhH})=\size(\HhH)$ and
$$\left|\widehat{\HhH}\right| = O\left(d \cdot \size(\HhH) \right).$$

We can now use Observation~\ref{obs:uniteH} and construct the set
$\BBB=\bigcup_{\HHH_i\in\ \widehat{\HhH}}\HHH_i$ which
is a $\frac{1}{4}$-representation for the class $\CCC$. In addition,
$$ |\BBB|= O\left(    \left|\widehat{\HhH}\right|  \cdot e^{\size(\HhH)}   \right) =
O\left(    d \cdot \size(\HhH)  \cdot e^{\size(\HhH)}   \right).$$
\remove{$$=O\left(  \frac{d}{\alpha^2} \cdot \ln(\frac{1}{\alpha}) \left( \RepDim(\CCC) + \ln\ln\ln(\frac{1}{\alpha}) \right) \cdot \left(  \frac{1}{\alpha} \right)^{\RepDim(\CCC)+\ln\ln\ln(1/\alpha)}    \right).$$}
Thus,
$\size(\BBB)=\ln|\BBB|=O\left( \ln(d) + \RepDim(\CCC) \right)$.
\end{proof}

\begin{corollary}
For every concept class $\CCC$ over $X_d$, $\DRepDim(\CCC)=O(\ln(d)+\RepDim(\CCC))$.
\end{corollary}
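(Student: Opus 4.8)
The plan is to read the corollary off directly from \thmref{ABtoA}, with essentially no additional work. Recall that $\DRepDim(\CCC)$ is defined as $\min\{\size(\HHH) : \HHH \text{ is a } \frac14\text{-representation for }\CCC\}$. \thmref{ABtoA} already produces one concrete $\frac14$-representation $\BBB$ of $\CCC$ satisfying $\size(\BBB)=O(\ln(d)+\RepDim(\CCC))$. So I would simply instantiate the defining minimum at the witness $\BBB$: this gives $\DRepDim(\CCC)\le\size(\BBB)=O(\ln(d)+\RepDim(\CCC))$, which is exactly the claim. Unwinding the definition of $\DRepDim$ is the entire argument.

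There is essentially no obstacle left at the level of the corollary; all of the content is carried by \thmref{ABtoA}, whose proof in turn chains \lemref{noParams} (to obtain a probabilistic representation with constant $\alpha,\beta$ and $\size=O(\RepDim(\CCC))$), \thmref{shrink} (to shrink the family $\HhH$ to one of size $O(d\cdot\size(\HhH))$ at the cost of constant-factor losses in $\alpha$ and $\beta$), and Observation~\ref{obs:uniteH} (to collapse the shrunk family into a single $\frac14$-representation $\BBB$ whose cardinality is the number of classes in the family times $e^{\size(\HhH)}$, yielding $\size(\BBB)=O(\ln d+\RepDim(\CCC))$). Since I am permitted to assume each of these earlier results, the proof of the corollary is just the one-line appeal to the definition of $\DRepDim$ described above. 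If anything deserves a sentence of care, it is only noting that the bound of \thmref{ABtoA} holds for \emph{every} concept class $\CCC$ over $X_d$, so the minimum defining $\DRepDim(\CCC)$ is over a nonempty set and the inequality is meaningful.
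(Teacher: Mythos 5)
Your proposal is correct and matches the paper's intent exactly: the corollary is stated as an immediate consequence of \thmref{ABtoA}, obtained by taking the representation $\BBB$ produced there as a witness in the minimum defining $\DRepDim(\CCC)$. Nothing further is needed.
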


\begin{corollary}
There exists a constant N s.t. for every concept class C over $X_d$ where $\DRepDim(\CCC) \geq N \log(d)$, the sample complexity that is necessary and sufficient for privately learning $\CCC$ is
 $\Theta_{\alpha,\beta}(\DRepDim(\CCC))$.
\end{corollary}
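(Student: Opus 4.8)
The plan is to derive this as an immediate consequence of \thmref{RepDim} and the preceding Corollary ($\DRepDim(\CCC)=O(\ln(d)+\RepDim(\CCC))$), together with the easy fact that a deterministic representation is a special case of a probabilistic one. First I would record the trivial inequality $\RepDim(\CCC)\le\DRepDim(\CCC)$: given a $\frac14$-representation $\HHH$ for $\CCC$, the singleton family $\HhH=\{\HHH\}$ equipped with the point distribution on $\{1\}$ is a $(\frac14,\frac14)$-probabilistic representation (indeed a $(\frac14,0)$-one) with $\size(\HhH)=\size(\HHH)$. Combined with the preceding Corollary, this gives an absolute constant $C_0$ such that
$$\RepDim(\CCC)\;\le\;\DRepDim(\CCC)\;\le\;C_0\bigl(\ln d+\RepDim(\CCC)\bigr).$$

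Next I would fix the constant $N$ in the statement. Choosing $N$ large enough (depending only on $C_0$ and on the base of the logarithm in the hypothesis) guarantees that $\DRepDim(\CCC)\ge N\log d$ forces $C_0\ln d\le\tfrac12\DRepDim(\CCC)$. Plugging this into the right-hand inequality above yields $\DRepDim(\CCC)\le\tfrac12\DRepDim(\CCC)+C_0\RepDim(\CCC)$, hence $\RepDim(\CCC)\ge\tfrac1{2C_0}\DRepDim(\CCC)$. Together with the reverse bound $\RepDim(\CCC)\le\DRepDim(\CCC)$ this establishes $\RepDim(\CCC)=\Theta(\DRepDim(\CCC))$ for every class $\CCC$ satisfying the hypothesis.

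Finally I would invoke \thmref{RepDim}: the sample complexity necessary and sufficient for privately learning $\CCC$ is $\widetilde\Theta_{\beta}\bigl(\RepDim(\CCC)/(\alpha\epsilon)\bigr)$; substituting $\RepDim(\CCC)=\Theta(\DRepDim(\CCC))$ and treating the accuracy, confidence, and privacy parameters $\alpha,\beta,\epsilon$ as constants (as is done throughout this discussion), this becomes $\Theta_{\alpha,\beta}(\DRepDim(\CCC))$, and both the ``necessary'' and ``sufficient'' directions are preserved since both the $\Theta(\cdot)$ upper and lower bounds on $\RepDim(\CCC)$ are used. There is essentially no obstacle here; the only points needing a moment's care are the bookkeeping in the choice of $N$ so that the $\ln d$ term is absorbed into $\DRepDim(\CCC)$, and observing that the $\widetilde\Theta$/subscripted-$\Theta$ notation merely hides the (constant) dependence on the learning and privacy parameters.
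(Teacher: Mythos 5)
Your proposal is correct and follows exactly the route the paper intends: the corollary is an immediate consequence of \thmref{RepDim} together with the preceding corollary $\DRepDim(\CCC)=O(\ln d+\RepDim(\CCC))$ and the trivial bound $\RepDim(\CCC)\le\DRepDim(\CCC)$, with $N$ chosen so the $\ln d$ term is absorbed. Your bookkeeping (the singleton-family observation, the choice of $N$, and treating $\alpha,\beta,\epsilon$ as constants so that $\widetilde\Theta_\beta$ collapses to $\Theta_{\alpha,\beta}$) matches the paper's implicit argument.
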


\section{Probabilistic \; Representation \; for Privately Solving Optimization Problems}

The notion of probabilistic representation applies not only to private
learning, but also to a broader task of optimization problems. We consider
the following scenario:

\begin{definition}
An {\em optimization problem} $\opt$ over a universe $X$ and a set of solutions $\FFF$ is defined by a quality function $q:X^*\times \FFF \rightarrow [0,1]$. Given a database $\db$, the task is to choose a solution $f\in\FFF$
such that $q(\db,f)$ is maximized.
\end{definition}

\myparagraph{Notation.} We will refer to the optimization problem defined by a quality function $q$ as $\opt_q$.

\begin{definition}
An {\em $\alpha$-good} solution for a database $\db$ is a solution $s$ such that $q(\db,s)\geq \max_{f\in\FFF}\{q(\db,f)\}-\alpha$. 
\end{definition}

Given an optimization problem $\opt_q$, one can use the exponential mechanism to choose a solution $s\in\FFF$.
In general, this method achieves a reasonable solution only for databases of size $\Omega(\log|\FFF|/\epsilon)$.
To see this, consider a case where there exists a database $\db$ of $m$ records such that exactly one solution $t\in\FFF$ has a quality of $q(\db,t)=1$, and every other $f\in\FFF$ has a quality of $q(\db,f)=1/2$. The probability of the exponential mechanism choosing $t$ is:
$$ \Pr[t \text{ is chosen}] = \frac{\exp(\epsilon m /2)}{(|\FFF|-1)\cdot\exp(\epsilon m /4)+\exp(\epsilon m /2)}.$$
Unless
\begin{eqnarray}
\label{eqn:neededGap}
  & m\geq\frac{4}{\epsilon}\ln(|\FFF|-1)=\Omega(\frac{1}{\epsilon}\ln|\FFF|),  &
\end{eqnarray}
the above probability is strictly less than $1/2$. 
Using our notations of probabilistic representation, it might be possible to reduce the necessary database size.

Consider using the exponential mechanism for choosing a solution $s$, not out of $\FFF$, but rather from a smaller set of solutions $\BBB$. Roughly speaking, the factor of $\ln|\FFF|$ in requirement (\ref{eqn:neededGap}) will now be replaced with $\ln|\BBB|$, which corresponds to size of the  representation.
Therefore, the database size $m$ should be at least $\ln|\BBB|/\epsilon$. So $m$ needs to be bigger than the size of the representation by at least a factor of $1/\epsilon$.

In the following analysis we will denote this required gap, i.e., $m/\ln|\BBB|$, as $\Delta$. We will see that the existence of a private approximation algorithm implies a probabilistic representation with $1<\Delta\approx\frac{1}{\epsilon}$, and that a probabilistic representation with $\Delta>1$ implies a private approximation algorithm. Bigger $\Delta$ corresponds to better privacy; however, it might be harder to achieve.

\begin{definition}\label{def:drepOpt}
Let $\opt_q$ be an optimization problem over a universe $X$ and a set of solutions $\FFF$.
Let $\BBB$ be a set of solutions, and denote $\size(\BBB)=\ln|\BBB|$.
We say that $\BBB$ is an $\alpha$-deterministic representation of $\opt_q$ 
for databases of $m$ elements if for every $\db\in X^m$ there exists a solution $s\in\BBB$ such that $q(\db,s)\geq \max_{f\in\FFF}\{q(\db,f)\}-\alpha$.
\end{definition}

\begin{definition}
Let $\BBB$ be an $\alpha$-deterministic representation of $\opt_q$ for databases of $m$ elements.
Denote $\Delta\triangleq\frac{m}{\size(\BBB)}$. If $\Delta>1$, then we say that the {\em ratio} of $\BBB$ is $\Delta$.
\end{definition}

An $\alpha$-deterministic representation $\BBB$ with ratio $\Delta$ is required to support all the databases of $m=\Delta\cdot\size(\BBB)$ elements. That is, for every $\db\in X^m$, the set $\BBB$ is required to contain at least one $\alpha$-good solution.

Fix $\db\in X^m$. Intuitively, $\Delta$ controls the ratio between $m$ and number of bits needed to represent an $\alpha$-good solution for $\db$. As $\BBB$ contains an $\alpha$-good solution for $\db$, and assuming $\BBB$ is publicly known, this solution could be represented with $\ln|\BBB|=\size(\BBB)=m/\Delta$ bits.

\begin{definition}\label{def:repOpt}
Let $\opt_q$ be an optimization problem over a universe $X$ and a set of solutions $\FFF$.
Let $\PPP$ be a distribution over $\{1,2, \ldots ,r\}$, and let $\BbB=\{\BBB_1,\BBB_2, \ldots ,\BBB_r\}$ be a family of solution sets for $\opt_q$.
We denote $\size(\BbB)=\max\{  \;  \ln|\BBB_i| : \BBB_i \in \BbB  \;  \}$.
We say that $(\BbB,\PPP)$ is an $(\alpha,\beta)$-probabilistic representation of $\opt_q$ 
for databases of $m$ elements if for every $\db\in X^m$:
$$\Pr_{\PPP}\left[\exists s \in \BBB_i \;\; s.t. \;\; q(\db,s)\geq \max_{f\in\FFF}\{q(\db,f)\}-\alpha\right]\geq 1-\beta.$$
\end{definition}

\begin{definition}
Let $(\BbB,\PPP)$ be an $(\alpha,\beta)$-probabilistic representation of $\opt_q$ for databases of $m$ elements.
Denote $\Delta\triangleq\frac{m}{\size(\BbB)}$. If $\Delta>1$, then we say that the {\em ratio} of the representation is $\Delta$.
\end{definition}

\begin{definition}
An optimization problem $\opt_q$ is {\em bounded} if $\Big||\db_1|\cdot q(\db_1,f)-|\db_2|\cdot q(\db_2,f)\Big| \leq 1$  for every solution $f$ and every two
neighboring databases $\db_1,\db_2$.
\end{definition}

We are interested in approximating bounded optimization problems, while guaranteeing differential privacy:

\begin{definition}
Let $\opt_q$ be a bounded optimization problem over a universe $X$ and a set of solutions $\FFF$.
An algorithm $A$ is an {\em $(\alpha,\beta,\epsilon)$-private approximation algorithm} for $\opt_q$ with a database of $m$ records if:
\begin{enumerate}
\item Algorithm $A$ is $\epsilon$-differentially private (as formulated in \defref{eps-dp});
\item For every $\db\in X^m$, algorithm $A$ outputs with probability at least $(1-\beta)$ a solution $s$ such that $q(\db,s)\geq \max_{f\in\FFF}\{q(\db,f)\}-\alpha$.
\end{enumerate}
\end{definition}

\begin{example}[Sanitization]
Consider a class of predicates $\CCC$ over $X$.
A database $\db$ contains points taken from $X$.
A predicate query $Q_c$ for $c\in\CCC$ is defined as $Q_c(\db) = \frac{1}{|\db|}\cdot |\{x_i \in \db \,:\,  c(x_i) =1\} |$.
Blum et al.~\cite{BLR08} defined a sanitizer (or data release mechanism) as a differentially private algorithm that, on input a database $\db$, outputs another database $\hat{\db}$ with entries taken from $X$.
A sanitizer $A$ is $(\alpha, \beta)$-useful for predicates in the class $\CCC$ if for every database $\db$ it holds that
$$ \Pr_A\left[\forall c\in C \;\; \big| Q_c(S)-Q_c(\hat{\db}) \big|\leq\alpha\right]\geq 1-\beta.$$

This scenario can be viewed as a bounded optimization problem:
The solutions are sanitized databases. For an input database $\db$ and and a sanitized database $\hat{\db}$, the quality function is
$$q(\db,\hat{\db})=1-\max_{c\in C}{\left\{|Q_c(\db)-Q_c(\hat{\db})|\right\}}.$$
To see that this optimization problem is bounded, note that for every two neighboring databases $\db_1,\db_2$ of $m$ elements, and every $c\in C$ it holds that $|Q_c(\db_1)-Q_c(\db_2)|\leq\frac{1}{m}$. Therefore, for every sanitized database $f$,
$$
m\cdot|q(\db_1,f)-q(\db_2,f)|
=m\cdot\left| \max_{c\in C}\{ |Q_c(\db_1)-Q_c(f)| \}-\max_{c\in C}\{ |Q_c(\db_2)-Q_c(f)| \}\right|
\leq 1
$$
\end{example}
\medskip

The next two lemmas establish an equivalence between 
a private approximation algorithm and a probabilistic 
representation for a bounded optimization problem.

\begin{lemma}\label{lem:optEq1}
Let $\opt_q$ be a bounded optimization problem over a universe $X$. If there exists a pair $(\BbB,\PPP)$ that $(\alpha,\beta)$-probabilistically represents $\opt_q$ for databases of $m$ elements, s.t. the ratio of $(\BbB,\PPP)$ is $\Delta>1$, then for every $\hat\alpha,\hat\beta,\epsilon$ satisfying
$$\Delta\geq\frac{2}{\epsilon\hat\alpha}\left(1+\frac{\ln(1/\hat\beta)}{\size(\BbB)}\right),$$
there exists an $\big( (\alpha+\hat\alpha),(\beta+\hat\beta),\epsilon \big)$-approximation algorithm for $\opt_q$ with a database of size 
$m$.
\end{lemma}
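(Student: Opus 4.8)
The plan is to mimic the structure of the proof of Lemma~\ref{lem:equivalence2}, since a probabilistic representation of an optimization problem plays exactly the same role as a probabilistic representation of a concept class: first sample a set from the family, then run the exponential mechanism inside it. Concretely, given $(\BbB,\PPP)$, I would define the algorithm $A$ on input database $\db \in X^m$ to: (1) randomly choose $\BBB_i \in_{\PPP} \BbB$; and (2) apply the exponential mechanism with privacy parameter $\epsilon$ and quality function $q(\db,\cdot)$ restricted to $\BBB_i$. Privacy is immediate: step (1) is data-independent, and since $\opt_q$ is bounded, the function $\db \mapsto m\cdot q(\db,f)$ has sensitivity at most $1$, so the exponential mechanism run with parameter $\epsilon$ is $\epsilon$-differentially private; hence $A$ is $\epsilon$-differentially private.

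For utility, fix a database $\db \in X^m$ and let $\mathrm{OPT}_\db = \max_{f\in\FFF} q(\db,f)$. Define two good events: $E_1$ is the event that the chosen $\BBB_i$ contains an $\alpha$-good solution for $\db$, i.e. some $s\in\BBB_i$ with $q(\db,s)\geq \mathrm{OPT}_\db - \alpha$; and $E_2$ is the event that the exponential mechanism returns a solution $h\in\BBB_i$ whose quality is within $\hat\alpha$ of the best quality available inside $\BBB_i$. If both events occur, then $A$ outputs a solution of quality at least $\mathrm{OPT}_\db - \alpha - \hat\alpha$, which is $(\alpha+\hat\alpha)$-good. By the definition of an $(\alpha,\beta)$-probabilistic representation, $\Pr[E_1]\geq 1-\beta$. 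By the guarantee of the exponential mechanism (the Proposition in Section~\ref{expMech}, translated from empirical error to quality via $q(\db,\cdot)=m\cdot(\text{count})$ appropriately, noting the mechanism runs over a set of size $|\BBB_i|\leq e^{\size(\BbB)}$), $E_2$ fails with probability at most $|\BBB_i|\cdot \exp(-\epsilon\hat\alpha m /2) \leq \exp\!\big(\size(\BbB) - \epsilon\hat\alpha m/2\big)$. A union bound then gives failure probability at most $\beta + \hat\beta$ once $\exp(\size(\BbB) - \epsilon\hat\alpha m/2)\leq \hat\beta$, i.e. once $\epsilon\hat\alpha m/2 \geq \size(\BbB) + \ln(1/\hat\beta)$.

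The final step is to check that the hypothesis $\Delta \geq \frac{2}{\epsilon\hat\alpha}\big(1+\frac{\ln(1/\hat\beta)}{\size(\BbB)}\big)$ is exactly what makes this inequality hold. Since $m = \Delta\cdot\size(\BbB)$ by the definition of the ratio, the condition $\epsilon\hat\alpha m/2 \geq \size(\BbB)+\ln(1/\hat\beta)$ becomes $\epsilon\hat\alpha\Delta\cdot\size(\BbB)/2 \geq \size(\BbB)+\ln(1/\hat\beta)$, i.e. $\Delta \geq \frac{2}{\epsilon\hat\alpha}\big(1 + \frac{\ln(1/\hat\beta)}{\size(\BbB)}\big)$, which is precisely the assumed bound. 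Thus $A$ is an $\big((\alpha+\hat\alpha),(\beta+\hat\beta),\epsilon\big)$-private approximation algorithm for $\opt_q$ with a database of size $m$, as required. The only mildly delicate point — and the one I would be most careful about — is the bookkeeping in translating the exponential mechanism's error guarantee (stated in Section~\ref{expMech} for $\error_S$ with the scaling $q(S,h)=|\{i:h(x_i)=y_i\}| = m(1-\error_S(h))$) to the general bounded quality function $q$ with $m\cdot q(\db,\cdot)$ playing the role of the count; the boundedness assumption is exactly what guarantees the relevant sensitivity is $1$, so the same constants go through.
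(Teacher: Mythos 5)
Your proposal is correct and follows essentially the same route as the paper's proof: the same two-step algorithm (sample $\BBB_i\in_{\PPP}\BbB$, then run the exponential mechanism with score $m\cdot q(\db,\cdot)$ over $\BBB_i$), the same two bad events bounded by $\beta$ and by $|\BBB_i|\exp(-\epsilon\hat\alpha m/2)$, and the same substitution $m=\Delta\cdot\size(\BbB)$ to verify that the assumed bound on $\Delta$ makes the second term at most $\hat\beta$. Your bookkeeping via $|\BBB_i|\leq e^{\size(\BbB)}$ and the explicit sensitivity remark for the bounded quality function are exactly the right (and slightly more careful) way to justify the steps the paper states tersely.
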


\begin{proof}
Consider the following algorithm $A$:
\begin{center}
\noindent\fbox{
\parbox{32pc}{
Inputs: a database $\db\in X^m$, and a privacy parameter $\epsilon$.

\begin{enumerate}[itemindent=3pt, topsep=1pt, itemsep=1pt,align=left,leftmargin=*,
        labelsep=5pt]
  \item Randomly choose $\BBB_i \in_\PPP \BbB$.
	\item Choose $s \in \BBB_i$ using the exponential mechanism, that is, with probability 
$$\frac{\exp(\epsilon\cdot m\cdot q(\db,s)/2)}{\sum_{f\in\BBB_i}\exp(\epsilon\cdot m\cdot q(\db,f)/2)}.$$
\end{enumerate}}}\end{center}

By the properties of the exponential mechanism, $A$ is $\epsilon$-differentially private. Fix a database $\db\in X^m$, and define the following 2 bad events:
\begin{enumerate}[label=$E_{\arabic*}$]
\item The set $\BBB_i$ chosen in step 1 does not contain a solution $s$ s.t. $q(\db,s)\geq \max_{f\in\FFF}\{q(\db,f)\}-\alpha$.
\item The solution $s$ chosen in step 2 is such that $q(\db,s) < \max_{t\in \BBB_i}{q(\db,t)} - \hat\alpha$.
\end{enumerate}
Note that if those two bad events do not occur, algorithm $A$ outputs a solution $s$ such that $q(\db,s)\geq\max_{f\in\FFF}\{q(\db,f)\}-\alpha-\hat\alpha$.
As $(\BbB,\PPP)$ is an $(\alpha,\beta)$-probabilistic representation of $\opt_q$ for databases of size $m$, event $E_1$ happens with probability at most $\beta$.
By the properties of the exponential mechanism, the probability of event $E_2$ is bounded by $|\BBB_i|\cdot\exp(-\epsilon m \hat\alpha/2)$.
As $m=\Delta\size(\BbB)$, this probability is at most
\begin{align*}
\Pr[E_2] &\leq \size(\BbB)\cdot\exp(-\epsilon m \hat\alpha/2)\\
&= \size(\BbB)\cdot\exp(-\epsilon \Delta\size(\BbB) \hat\alpha/2)\\
&\leq \size(\BbB)\cdot\exp\left(- \left(1+\frac{\ln(1/\hat\beta)}{\size(\BbB)}\right)\size(\BbB)\right)\\
&= \size(\BbB)\cdot\exp(-\size(\BbB)-\ln(1/\hat\beta)) = \hat\beta.
\end{align*}

Therefore, algorithm $A$ outputs an $(\alpha+\hat\alpha)$-good solution with probability at least $(1-\beta-\hat\beta)$.
\end{proof}

\begin{lemma}\label{lem:optEq2}
Let $\opt_q$ be an optimization problem. If there exists an 
$(\alpha,\beta,\epsilon)$-private approximation algorithm for $\opt_q$ with a database of $m$ records, then for every $\hat{\beta}$ satisfying
$$\Delta\triangleq\frac{m}{\ln(\frac{1}{1-\beta})+\ln\ln(\frac{1}{\hat\beta})+m\cdot\epsilon}>1,$$
there exists a pair $(\BbB,\PPP)$ that $(\alpha,\hat{\beta})$-probabilistically represents $\opt_q$ for databases of $m$ elements, where the ratio of the representation is $\Delta$.
\end{lemma}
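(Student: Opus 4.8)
The plan is to adapt the proof of Lemma~\ref{lem:oldEquivalence1} (the learning case), replacing the role of a concept together with a distribution by a single \emph{fixed reference database}. Let $A$ be an $(\alpha,\beta,\epsilon)$-private approximation algorithm for $\opt_q$ with a database of $m$ records, and let $\FFF$ be its solution set. For a database $\db\in X^m$ define the set of good solutions
$$G_\db=\left\{s\in\FFF \;:\; q(\db,s)\geq\max_{f\in\FFF}\{q(\db,f)\}-\alpha\right\}.$$
The accuracy guarantee of $A$ says precisely that $\Pr_A[A(\db)\in G_\db]\geq 1-\beta$ for every $\db\in X^m$. Now fix once and for all an arbitrary reference database $\db_0\in X^m$. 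Unlike the learning setting of Lemma~\ref{lem:equivalence1}, there is no ``unknown label'' to worry about here, so one reference database suffices. Since $A$ is $\epsilon$-differentially private and any two databases in $X^m$ are at Hamming distance at most $m$, the consequence of \defref{eps-dp} stated right after it gives, for every $\db\in X^m$,
$$\Pr_A[A(\db_0)\in G_\db]\ \geq\ e^{-\epsilon m}\cdot\Pr_A[A(\db)\in G_\db]\ \geq\ (1-\beta)\,e^{-\epsilon m}.$$

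Next I would amplify by independent repetition. Set $N=\dfrac{\ln(1/\hat\beta)}{1-\beta}\,e^{\epsilon m}$, let $\PPP$ be the distribution on solution sets induced by running $A(\db_0)$ independently $N$ times and collecting the $N$ outputs, and let $\BbB$ be the family of all subsets of $\FFF$ of cardinality at most $N$. Fix any $\db\in X^m$. The probability (over $\PPP$) that none of the $N$ outputs lands in $G_\db$ is at most
$$\big(1-(1-\beta)e^{-\epsilon m}\big)^{N}\ \leq\ \exp\!\big(-N(1-\beta)e^{-\epsilon m}\big)\ =\ \hat\beta,$$
so with probability at least $1-\hat\beta$ the sampled set contains an $\alpha$-good solution for $\db$. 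Since $\db$ was arbitrary, $(\BbB,\PPP)$ is an $(\alpha,\hat\beta)$-probabilistic representation of $\opt_q$ for databases of $m$ elements.

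It remains to compute the ratio. Since every member of $\BbB$ has cardinality at most $N$, we get
$$\size(\BbB)=\ln N=\ln\!\Big(\tfrac{1}{1-\beta}\Big)+\ln\ln\!\Big(\tfrac{1}{\hat\beta}\Big)+m\epsilon,$$
hence $m/\size(\BbB)=\Delta$ exactly as in the statement, and the hypothesis $\Delta>1$ is precisely the assertion that this is a legitimate ratio. I do not anticipate a genuine obstacle: this is a direct translation of the accuracy $\to$ transport-by-privacy $\to$ boost-by-repetition argument. The one point that needs a little care is making $\size(\BbB)$ come out as \emph{exactly} the expression appearing in the definition of $\Delta$ (rather than something slightly larger); as in Lemma~\ref{lem:oldEquivalence1} I would gloss over the integrality of $N$, or alternatively absorb the ceiling by letting $A$ ignore part of its input, exactly as is done for the analogous constants in the earlier proofs.
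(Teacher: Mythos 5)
Your proposal is correct and follows essentially the same route as the paper's own proof: fix a reference database (the paper uses $\vec{0}$, also noted to be arbitrary), transport the accuracy guarantee via group privacy to get success probability $(1-\beta)e^{-\epsilon m}$, amplify with $\Gamma=\frac{1}{1-\beta}\ln(\frac{1}{\hat\beta})e^{\epsilon m}$ independent runs, and read off $\size(\BbB)=\ln\Gamma$ so the ratio is exactly $\Delta$. The only cosmetic difference is that the paper takes $\BbB$ to consist of subsets of the support of $A$ rather than of all of $\FFF$, which changes nothing.
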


\begin{proof}
Let $A$ be an $(\alpha,\beta,\epsilon)$-private approximation algorithm for $\opt_q$, with a sample size $m$.
Fix an arbitrary input database $\db\in X^m$. Define $G$ as the set of all solutions $s$, possibly outputted by $A$, such that $q(\db,s)\geq\max_{f\in\FFF}\{q(\db,f)\}-\alpha$.
As $A$ is an $(\alpha,\beta,\epsilon)$-approximation algorithm, 
$\Pr_A\left[ A(\db) \in G \right]\geq 1-\beta$.
As $A$ is $\epsilon$-differentially private, 
$\Pr_A\left[ A(\vec{0}) \in G \right] \geq (1-\beta) e^{-m\epsilon}$,
where $\vec{0}$ is a database with $m$ zeros. That is, $\Pr_A\left[ A(\vec{0}) \notin G \right] \leq 1-(1-\beta) e^{-m\epsilon}$. Now, consider a set $\BBB$ containing the outcomes of $ \Gamma \triangleq \frac{1}{1-\beta} \ln(\frac{1}{\hat{\beta}}) e^{m\epsilon} $ executions of $A(\vec0)$. The probability that $\BBB$ does not contain
a solutions $s\in G$
is at most $(1-(1-\beta) e^{-m\epsilon})^{ \Gamma  } \leq \hat{\beta}$. Thus,
$\BbB = \left\{ \BBB \subseteq support(A) \; : \; |\BBB| \leq  \Gamma  \right\}$,
and $\PPP$, the distribution induced by $A(\vec0)$, are an $(\alpha,\hat{\beta})$-probabilistic representation of $\opt_q$ for databases with $m$ elements. Moreover, the ratio of the representation is
\begin{eqnarray*}
\frac{m}{\size(\BbB)} &=& \frac{m}{\max\{  \;  \ln|\BBB| : \BBB \in \BbB  \;  \}}\\
&=&\frac{m}{\ln(\frac{1}{1-\beta})+\ln\ln(\frac{1}{\hat\beta})+m\epsilon}=\Delta.
\end{eqnarray*}
\end{proof}

\subsection{Exact 3SAT}
Consider the following bounded optimization problem, denoted as $\opt_{\rm E3SAT}$:
The universe $X$ is the set of all possible clauses with exactly $3$ different literals over $n$ variables, and the set of solutions $\FFF$ is the set of all possible $2^n$ assignments. Given a database $\db=(\sigma_1,\sigma_2,\ldots,\sigma_m)$ containing $m$ E3CNF clauses, the quality of an assignment $a\in\FFF$ is
$$q(\db,a)=\frac{|\{  i : a(\sigma_i)=1 \}|}{m}.$$

Aiming at the (very different) objective of secure protocols for search problems, Beimel et al.~\cite{BCNW08} defined the notation of solution-list algorithms, which corresponds to our notation of deterministic representation. We next rephrase their results using our notations.

\begin{enumerate}[label=$R{\arabic*}$]
\item For every $\alpha>0$ and every $\Delta>1$, there exists a set $\BBB$ that $(\alpha+1/8)$-deterministically represents $\opt_{\rm E3SAT}$ for databases of size $m=O\big(\Delta(\ln\ln(n)+\ln(1/\alpha)\big))$, and a ratio of $\Delta$.
\item Let $\alpha<1/2$ and $\Delta>1$. For every set $\BBB$ that $\alpha$- deterministically represents $\opt_{\rm E3SAT}$ for databases of size $m$ with a ratio of $\Delta$, it holds that $m=\Omega\big( \ln\ln(n)  \big)$.
\end{enumerate}

Using $(R1)$ and a deterministic version of Lemma \ref{lem:optEq1}, for every $\alpha,\beta,\epsilon>0$, there exists an 
$\big( (1/8+\alpha),\beta,\epsilon \big)$- approximation algorithm for $\opt_{\rm E3SAT}$ with a database of $m=O_{\alpha,\beta,\epsilon}(\ln\ln(n))$ clauses. By~$(R2)$, this is the best possible using a deterministic representation.

We can reduce the necessary database size, using a probabilistic representation. Fix a clause with three different literals. If we pick an assignment at random, then with
probability at least $7/8$ it satisfies the clause. Now, fix any exact 3CNF formula. If we pick an assignment
at random, then the expected fraction of satisfied clauses is at least $7/8$. Moreover, for every $0<\alpha<7/8$, the fraction of satisfied clauses is at least $(7/8-\alpha)$ with probability at least $\frac{\alpha}{\alpha+1/8}$. So, if we pick $t=\frac{\ln(1/\beta)}{\ln(\alpha+1/8)+\ln(1/\alpha)}$ random assignments, the probability that none of them will satisfy at least $(7/8-\alpha)m$ clauses is at most $\left(  \frac{\alpha}{\alpha+1/8}  \right)^t=\beta$. So, for every $\Delta>1$,
$$\BbB = \{ \BBB :  \BBB \text{ is a set of at most } t \text{ assignments}    \},$$
and $\PPP$, the distribution induced on $\BbB$ by randomly picking $t$ assignments, are an $\big(  (1/8+\alpha),\beta \big)$-probabilistic representation of $\opt_{\rm E3SAT}$ for databases of size $\Delta\cdot\ln(t)$ and a ratio of $\Delta$. By Lemma \ref{lem:optEq2}, for every $\epsilon$ there exists an 
$\big( (1/8+\alpha),\beta,\epsilon \big)$-approximation algorithm for $\opt_{\rm E3SAT}$ with a database of $m=o_{\alpha,\beta,\epsilon}(1)$ clauses.

\section{Extensions}
\subsection{$(\epsilon,\delta)$-Differential Privacy}
The notation of $\epsilon$-differential privacy was generalized to $(\epsilon,\delta)$-differential privacy, where the requirement in inequality (\ref{eqn:diffPrivDef}) is changed to
$$\Pr[A(\db_1 ) \in \mathcal{F}] \leq \exp(\epsilon) \cdot \Pr[A(\db_2) \in \mathcal{F}]+\delta.$$
The proof of Lemma \ref{lem:equivalence1} remains valid even if algorithm $A$ is only $(\epsilon,\delta)$-differential private for
\begin{eqnarray}
\label{eqn:deltaCondition}
  & \delta\leq\frac{1}{8} e^{-8\alpha\epsilon m} (1-e^{-\epsilon}).  &
\end{eqnarray}
To see this, note that inequality (\ref{eqn:epsilonDelta}) changes to

\begin{eqnarray*}
&&\Pr_A\left[ A(\vec{0}) \in G \right]\geq\\
&&\geq \left( \left( \left( \Pr_A\left[ A(\db) \in G \right] \cdot e^{-\epsilon} - \delta \right)e^{-\epsilon} - \delta \right) \cdots \right)e^{-\epsilon} - \delta \\
&&\geq \frac{1}{4} e^{-8\alpha\epsilon m}-\delta\left(\sum_{i=0}^{8\alpha m-1}{e^{-i\epsilon}}\right)\\
&&\geq \frac{1}{4} e^{-8\alpha\epsilon m}-\delta\left(\frac{1}{1-e^{-\epsilon}}\right) \geq \frac{1}{8} e^{-8\alpha\epsilon m}.
\end{eqnarray*}
The rest of the proof remains almost intact (only minor changes in the constants).
With that in mind, we see that the lower bound showed in Theorem \ref{thm:RepDim} 
for $\epsilon$-differentially private (that is, with $\delta=0$) learners
also applies for $(\epsilon,\delta)$-differentially private learners satisfying inequality (\ref{eqn:deltaCondition}). That is, every such learner for a class $\CCC$ must use $\Omega\left( \frac{\RepDim(\CCC)}{\alpha\epsilon}  \right)$ samples.

When using $(\epsilon,\delta)$-differential privacy, $\delta$ should be negligible in the security parameter, that is, in $d$ -- the representation length of elements in $X_d$. Therefore, using $(\epsilon,\delta)$-differential privacy instead of $\epsilon$-differential privacy cannot reduce the sample complexity for PPAC learning a concept class $\CCC$ whenever $\RepDim(\CCC)=O\left(  \log(d) \right)$.

\subsection{Probabilistic Representation Using a Hypothesis Class}
We will now consider a generalization of our representation notations that can be useful when considering PPAC learners that use a specific hypothesis class. In particular, those notation can be useful when considering proper-PPAC learners, that is, a learner that learns a class $\CCC$ using a hypothesis class $\BBB\subseteq\CCC$.
\begin{definition}
We define the $\alpha$-Deterministic Representation Dimension of a concept class $\CCC$ using a hypothesis class $\BBB$ as
$$\DRepDim_{\alpha}(\CCC,\BBB) = \min\left\{ \size(\HHH)  :
\begin{array}{l}
\HHH\subseteq\BBB \text{ is an }\\ 
\alpha \text{-representation}\\
\text{for class } \CCC
\end{array}\right\}.$$
\end{definition}

Note that $\DRepDim_{\frac{1}{4}}(\CCC,2^{X_d})=\DRepDim(\CCC)$. The dependency on $\alpha$ in the above definition is necessary: if $\CCC$ is not contained in $\BBB$ then for every small enough $\alpha$, the hypothesis class $\BBB$ itself does not $\alpha$-represents $\CCC$ (and therefore no subset $\HHH\subseteq\BBB$ can $\alpha$-represent $\CCC$). Moreover, when considering the notations of representation using a hypothesis class, our boosting technique for $\alpha$ does not work (as the boosting uses more complex hypotheses).

\begin{example}
Beimel et al.~\cite{BKN10} showed that for every $\alpha<1$, every subset $\HHH \subsetneq \point_d$ does not $\alpha$-represent the class $\point_d$. Therefore, $\DRepDim_{\alpha}(\point_d,\point_d) = \theta(d)$ for every $\alpha<1$.
\end{example}

\begin{definition}
A pair $(\HhH,\PPP)$ is an $(\alpha,\beta)$-probabilistic representation for a concept class $C$ {\em using a hypothesis class $\BBB$} if:
\begin{enumerate}
\item  $(\HhH,\PPP)$ is an  $(\alpha,\beta)$-probabilistic representation for the class $C$, as formulated in Definition \ref{def:prep}.
\item Every $\HHH_i\in\HhH$ is a subset of $\BBB$.
\end{enumerate}
\end{definition}

Note that whenever $\BBB=2^{X_d}$, this definition is identical to Definition \ref{def:prep}. Using this general notation, we can restate Lemma \ref{lem:equivalence2} and Lemma \ref{lem:equivalence1} as follows:
\begin{lemma}\label{lem:gen2}
If there exists a pair $(\HhH,\PPP)$ that $(\alpha,\beta)$- probabilistically represents a class $\CCC$ using a hypothesis class $\BBB$, then for every $\epsilon$ and every $\gamma$ there exists an algorithm $A$ that $(\alpha+\gamma,3\beta,\epsilon)$-PPAC learns $\CCC$ using $\BBB$ and a sample size
$m=O( ( \size(\HhH) + \ln(\frac{1}{\beta}) ) \max\{ \frac{1}{\gamma \epsilon} , \frac{1}{ \gamma^2 }  \} )$.
\end{lemma}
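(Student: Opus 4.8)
The plan is to run essentially the algorithm of \lemref{equivalence2}: on input $S$, first sample a class $\HHH_i\in_\PPP\HhH$ (a step that does not look at the data), and then apply the exponential mechanism over $\HHH_i$ with privacy parameter $\epsilon$. Privacy is immediate from the properties of the exponential mechanism together with the fact that step~1 is data-independent, so $A$ is $\epsilon$-differentially private. Since by assumption every $\HHH_i\in\HhH$ satisfies $\HHH_i\subseteq\BBB$, the hypothesis returned by $A$ always lies in $\BBB$; this is exactly the requirement that $A$ learns $\CCC$ \emph{using $\BBB$}. Hence the whole burden is the utility analysis, and the only difference from \lemref{equivalence2} is that here we want an \emph{additive} slack $\gamma$ in the generalization error rather than a multiplicative blow-up of the constant $\alpha$.

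First I would fix a target $c\in\CCC$ and a distribution $\DDD$, and define three good events mirroring $E_1,E_2,E_3$ from the proof of \lemref{equivalence2}, but now splitting the slack $\gamma$ into three equal pieces of size $\gamma/3$: $(E_1')$ the sampled class $\HHH_i$ contains a hypothesis $h$ with $\error_S(h)\le\alpha+\gamma/3$; $(E_2')$ every $h\in\HHH_i$ with $\error_\DDD(c,h)>\alpha+\gamma$ has $\error_S(h)>\alpha+2\gamma/3$; and $(E_3')$ the exponential mechanism returns $h$ with $\error_S(h)\le\gamma/3+\min_{f\in\HHH_i}\error_S(f)$. If all three events occur then $E_1'$ gives $\min_{f\in\HHH_i}\error_S(f)\le\alpha+\gamma/3$, so by $E_3'$ the output hypothesis has empirical error at most $\alpha+2\gamma/3$, and then the contrapositive of $E_2'$ yields $\error_\DDD(c,h)\le\alpha+\gamma$, as needed.

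Next I would bound the failure probabilities of the three events. Event $E_1'$ fails only if the probabilistic representation fails (probability at most $\beta$) or if a fixed hypothesis of generalization error at most $\alpha$ has empirical error exceeding $\alpha+\gamma/3$; by a Chernoff/Hoeffding bound the latter has probability $\exp(-\Omega(m\gamma^2))$, so it suffices to take $m=\Omega(\frac1{\gamma^2}\ln\frac1\beta)$. For $E_2'$, a single hypothesis with $\error_\DDD(c,\cdot)>\alpha+\gamma$ has empirical error at most $\alpha+2\gamma/3$ with probability $\exp(-\Omega(m\gamma^2))$, and a union bound over the at most $e^{\size(\HhH)}$ hypotheses of $\HHH_i$ requires $m=\Omega(\frac1{\gamma^2}(\size(\HhH)+\ln\frac1\beta))$. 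For $E_3'$, the exponential mechanism guarantee of Section~\ref{expMech} gives failure probability $|\HHH_i|\cdot\exp(-\epsilon\gamma m/6)$, requiring $m=\Omega(\frac1{\gamma\epsilon}(\size(\HhH)+\ln\frac1\beta))$. Taking the maximum of the three requirements gives $m=O\big((\size(\HhH)+\ln\frac1\beta)\cdot\max\{\frac1{\gamma\epsilon},\frac1{\gamma^2}\}\big)$, as stated, and a union bound over the bad events, with the thresholds chosen so that the representation contributes $\beta$ and each of the three concentration arguments contributes at most $2\beta/3$, gives total failure probability at most $3\beta$.

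The main thing to be careful about is the accounting rather than any substantive new idea: one must check that the $\gamma/3$ split propagates correctly through the chain $E_1'\wedge E_3'\Rightarrow$ (small empirical error) $\Rightarrow E_2'\Rightarrow$ (small generalization error), and that the constants in the four tail contributions (one from the representation, three from the concentration bounds) are calibrated to sum to $3\beta$. Apart from this bookkeeping, the argument is a direct transcription of the proof of \lemref{equivalence2}, now carried out inside the ambient hypothesis class $\BBB$.
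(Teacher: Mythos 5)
Your proposal is correct and follows essentially the same route as the paper's proof: the identical two-step algorithm (sample $\HHH_i\in_\PPP\HhH$ data-independently, then run the exponential mechanism over $\HHH_i$), three good events, and the same accounting yielding $m=O\big((\size(\HhH)+\ln(1/\beta))\max\{\frac{1}{\gamma\epsilon},\frac{1}{\gamma^2}\}\big)$ and failure probability $3\beta$. The only difference is bookkeeping: the paper states its second event as the two-sided uniform bound $|\error_S(h)-\error_{\DDD}(c,h)|\leq\gamma/3$ for all $h\in\HHH_i$ (so its first event is phrased in terms of generalization error, with no extra concentration step), whereas you fold a single-hypothesis concentration bound into $E_1'$ and use a one-sided uniform bound in $E_2'$, exactly as in the proof of Lemma~\ref{lem:equivalence2}; both versions are valid and give the stated guarantees.
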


Note that in the above lemma the resulting algorithm $A$ has accuracy $(\alpha+\gamma)$ as opposed to $6\alpha$ in lemma \ref{lem:equivalence2}, where $\gamma$ is arbitrary. While in section \ref{sec:learn} we did not mind the multiplicative factor of $6$ in the accuracy parameter (as we could boost it back), replacing it with an additive factor of $\gamma$ might be of value in this section as our
boosting technique for the accuracy parameter does not work here. As an example, consider a representation with $\alpha=\frac{1}{10}$. Without boosting capabilities, this change makes the difference between the ability to generate an algorithm with $\alpha=\frac{6}{10}$, or an algorithm with $\alpha=\frac{1}{10}+\frac{1}{1000}$.

\begin{proof}
Let $(\HhH,\PPP)$ be an $(\alpha,\beta)$-probabilistic representation for class $\CCC$ using a hypothesis class $\BBB$, and consider the following algorithm $A$:
$$\boxed{
\begin{array}{l}
\text{Inputs: } S=(x_i,y_i)_{i=1}^m \text{, and a privacy parameter } \epsilon.\\
{\begin{array}{ll}
1. & \text{Randomly choose } \HHH_i \in_\PPP \HhH .\\
2. & \text{Choose } h \in \HHH_i \text{ using the exp. mechanism with } \epsilon.\\
\end{array}}\\
\end{array}}$$
First note that the support of $A$ is indeed (a subset of) $\BBB$.
By the properties of the exponential mechanism, $A$ is $\epsilon$-differentially private. Fix some $c\in \CCC$ and $\DDD$, and define the following 3 good events:
\begin{enumerate}[label=$E_{\arabic*}$]
\item $\HHH_i$ chosen in step 1 contains at least one hypothesis $h$ s.t. $\error_{\DDD}(h)\leq\alpha$.
\item For every $h\in\HHH_i$ it holds that $|\error_S(h)-\error_{\DDD}(c,h)|\leq\frac{\gamma}{3}$.
\item The exponential mechanism chooses an $h$ such that $\error_S(h) \leq \frac{\gamma}{3} + \min_{f\in \HHH_i}\left\{\error_S(f)\right\}$.
\end{enumerate}
Note that if those 3 good events happen, algorithm $A$ returns an $(\alpha+\gamma)$-good hypothesis. We will now show that those 3 events happen with high probability.

As $(\HhH,\PPP)$ is an $(\alpha,\beta)$-probabilistic representation for the class $\CCC$, event $E_1$ happens with probability at least $1-\beta$.

Using the Hoeffding bound, event $E_2$ happens with probability at leat $1-2|\HHH_i|\exp(-\frac{2}{9}\gamma^2 m)$. For $m\geq\frac{9}{2\gamma^2}\ln(\frac{2|\HHH_i|}{\beta})$, this probability is at leat $1-\beta$. 

The exponential mechanism ensures that the probability of event $E_3$ is at least $1-|\HHH_i| \cdot \exp(-\epsilon \gamma m /6)$ (see Section \ref{expMech}), which is at least $(1-\beta)$ for $m \geq \frac{6}{\gamma \epsilon} \ln(\frac{|\HHH_i|}{\beta})$.

All in all, by setting $m=6( \size(\HhH) + \ln(\frac{2}{\beta}) )\max\{\frac{1}{\gamma^2},\frac{1}{\gamma\epsilon} \} $ we ensure that the probability of $A$ failing to output an $(\alpha+\gamma)$-good hypothesis is at most $3\beta$.
\end{proof}

\begin{lemma}\label{lem:gen1}
If there exists an algorithm $A$ that $(\alpha,\frac{1}{2},\epsilon)$-PPAC learns a concept class $\CCC$ using a hypothesis class $\BBB$ and a sample size $m$, then there exists a pair $(\HhH,\PPP)$ that $(\alpha,1/4)$-probabilistically represents the class $\CCC$ using the hypothesis class $\BBB$ where $\size(\HhH) = O\left(m\epsilon\right)$.
\end{lemma}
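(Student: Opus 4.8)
The plan is to run the argument of Lemma~\ref{lem:oldEquivalence1} essentially verbatim, adding a single observation: a learner that uses hypothesis class $\BBB$ (as in Definitions~\ref{def:PAC} and~\ref{def:private-general}) only ever outputs hypotheses in $\BBB$, so the probabilistic representation extracted from it automatically consists of subsets of $\BBB$, which is exactly the extra requirement in the notion of a probabilistic representation \emph{using} $\BBB$.

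Concretely, I would first fix a target concept $c\in\CCC$ and a distribution $\DDD$ on $X_d$, and set $G=\{h\in\BBB : \error_\DDD(c,h)\leq\alpha\}$. Since $A$ is an $(\alpha,\tfrac12)$-PAC learner, $\Pr_{\DDD,A}[A(\db)\in G]\geq\tfrac12$ over the i.i.d.\ sample $\db$ (labeled by $c$) and the coins of $A$; hence some fixed database $\db$ of $m$ labeled examples satisfies $\Pr_A[A(\db)\in G]\geq\tfrac12$. Applying the group-privacy consequence of $\epsilon$-differential privacy (the inequality displayed immediately after Definition~\ref{def:eps-dp}) between $\db$ and the all-zeros database $\vec{0}$ gives $\Pr_A[A(\vec{0})\in G]\geq\tfrac12 e^{-m\epsilon}$, i.e.\ a single run of $A(\vec{0})$ misses $G$ with probability at most $1-\tfrac12 e^{-m\epsilon}$. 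I would then let $\HHH$ be the (multi)set of outcomes of $\Gamma\triangleq 2\ln(4)\,e^{m\epsilon}$ independent executions of $A(\vec{0})$, so that $\HHH$ contains no $\alpha$-good hypothesis for $c,\DDD$ with probability at most $(1-\tfrac12 e^{-m\epsilon})^{\Gamma}\leq\tfrac14$. Finally, take $\HhH=\{\HHH\subseteq\BBB : |\HHH|\leq\Gamma\}$ and let $\PPP$ be the distribution on $\HhH$ induced by performing $\Gamma$ independent runs of $A(\vec{0})$; then $(\HhH,\PPP)$ is an $(\alpha,\tfrac14)$-probabilistic representation for $\CCC$, and since every output of $A$ lies in $\BBB$, every member of $\HhH$ is a subset of $\BBB$, so this is a representation using $\BBB$. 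The size bound is $\size(\HhH)=\ln\Gamma=\ln(2\ln 4)+m\epsilon=O(m\epsilon)$.

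I do not expect a genuine obstacle here: the construction and its analysis are identical to those of Lemma~\ref{lem:oldEquivalence1}, and the only new ingredient — that a learner confined to outputs in $\BBB$ yields a family of hypothesis classes confined to subsets of $\BBB$ — is immediate from the definition of learning using a hypothesis class. (If one wanted the sharper dependence on $\alpha$, the same observation applied to the construction in the proof of Lemma~\ref{lem:equivalence1} would give the ``using $\BBB$'' statement with $\size(\HhH)=O(m\epsilon\alpha)$.)
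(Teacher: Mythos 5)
Your proposal is correct and matches the paper's intent exactly: the paper proves Lemma~\ref{lem:gen1} by declaring its proof identical to that of Lemma~\ref{lem:oldEquivalence1}, and your only added ingredient --- that a learner using hypothesis class $\BBB$ outputs only elements of $\BBB$, so the extracted family $\HhH$ consists of subsets of $\BBB$ --- is precisely the implicit observation the paper relies on. Your constants and the size bound $\ln(2\ln 4)+m\epsilon=O(m\epsilon)$ also agree with the paper's.
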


The proof of Lemma~\ref{lem:gen1} is identical to the proof of Lemma~\ref{lem:oldEquivalence1}.

\begin{definition}
We define the $\alpha$-Probabilistic Representation Dimension of a concept class $\CCC$ using a hypothesis class $\BBB$ as
$$\RepDim_{\alpha}(\CCC,\BBB) = \min\left\{ \size(\HhH)  : 
\begin{array}{l}
 \exists \PPP \text{ s.t. } (\HhH,\PPP)\\
 \text{ is an } (\alpha,\frac{1}{4})\text{-prob.}\\
 \text{representation}\\
 \text{for } \CCC \text{ using } \BBB
\end{array}
\right\}.$$
\end{definition}

\begin{example}
Beimel et al.~\cite{BKN10} showed that for every $\alpha<1$, every proper-PPAC learner for $\point_d$ requires $\Omega( (d+\log(1/\beta))/(\epsilon \alpha) )$ labled examples. Using Lemma \ref{lem:gen2}, we get that $\RepDim_{\alpha}(\point_d,\point_d)=\Omega(d)$.
\end{example}

We still do not know the relation between the representation dimension of a concept class and its VC dimension. However, the above example shows a strong separation between the VC dimension of the class $\point_d$ and  $\RepDim_{\alpha}(\point_d,\point_d)$.

\section{A Probabilistic Representation for Points}\label{sec:proofs}

Example \ref{example:pointRep} states the existence of a constant size probabilistic representation for the class $\point_d$. We now give the construction.
\begin{claim}\label{claim:pointEff}
There exists an $(\alpha,\beta)$-probabilistic representation for $\point_d$ of $\size$ $\ln(4/\alpha)+\ln\ln(1/\beta)$.
Furthermore, each hypothesis $h$ in each $\HHH_i$ has a short description and given $x$, the value $h(x)$ can be computed efficiently.
\end{claim}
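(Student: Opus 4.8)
The plan is to build the representation from random ``colorings'' (hash functions) of the domain. Fix $\alpha$ and $\beta$ and set $k=\lceil 2/\alpha\rceil$. For a function $\pi:X_d\to[k]$ and a value $v\in[k]$ define the hypothesis $h_{\pi,v}(x)=\mathbf{1}[\pi(x)=v]$, and let $\HHH_\pi=\{h_{\pi,v}:v\in[k]\}$; note $\HHH_\pi$ consists of $k$ hypotheses and does not depend on the target concept. The first step is to observe that for a uniformly random $\pi$ and any fixed $c_j\in\point_d$ and distribution $\DDD$ on $X_d$, the single hypothesis $h_{\pi,\pi(j)}$ is already likely to be good: it agrees with $c_j$ at $j$ (both output $1$) and differs from $c_j$ exactly on the set $\{x\neq j:\pi(x)=\pi(j)\}$, so $\error_\DDD(c_j,h_{\pi,\pi(j)})=\sum_{x\neq j,\ \pi(x)=\pi(j)}\DDD(x)$. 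Its expectation over the choice of $\pi$ is $\sum_{x\neq j}\DDD(x)/k\le 1/k\le\alpha/2$, so by Markov's inequality $\Pr_\pi\big[\exists h\in\HHH_\pi:\error_\DDD(c_j,h)\le\alpha\big]\ge 1/2$. No case analysis on whether $j$ is ``heavy'' under $\DDD$ is needed.

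Next I would amplify the constant $1/2$ to $1-\beta$ exactly as in the $\beta$-boosting of Claim~\ref{claim:boostB}: draw $M=\lceil\log_2(1/\beta)\rceil$ independent colorings $\pi_1,\dots,\pi_M$, let $\HHH=\bigcup_{t=1}^M\HHH_{\pi_t}$, which has at most $Mk=O(\tfrac{1}{\alpha}\ln\tfrac{1}{\beta})$ hypotheses, and take $\HhH$ to be the family of all sets of at most $\tfrac{4}{\alpha}\ln(1/\beta)$ boolean functions with $\PPP$ the distribution this process induces on it. For a fixed $c_j$ and $\DDD$ the events ``$\HHH_{\pi_t}$ contains no $\alpha$-good hypothesis for $c_j,\DDD$'' are independent, each of probability at most $1/2$, so their conjunction has probability at most $2^{-M}\le\beta$; thus $(\HhH,\PPP)$ is an $(\alpha,\beta)$-probabilistic representation for $\point_d$ and $\size(\HhH)\le\ln\!\big(\tfrac{4}{\alpha}\ln\tfrac{1}{\beta}\big)=\ln(4/\alpha)+\ln\ln(1/\beta)$, as required.

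The construction above stores each coloring $\pi$ as a table of size $2^d$, so each hypothesis has description length exponential in $d$. To fix this I would replace the truly random $\pi$ by a uniformly random member $g$ of a pairwise-independent hash family $\mathcal{G}$ from $\{0,1\}^d$ to $[k]$ (e.g.\ a Carter--Wegman family), which is described by $O(d)$ bits and for which $g(x)$ is computable in time $\mathrm{poly}(d)$. The only property of $\pi$ used in the analysis is that $\Pr[\pi(x)=\pi(j)]\le 1/k$ for $x\neq j$, which holds for $\mathcal{G}$; so the first-moment computation is unchanged and the whole argument goes through verbatim with $\pi$ replaced by $g$. Each hypothesis $h_{g,v}$ is then described by the pair (seed of $g$, value $v\in[k]$), of total length $O(d+\log(1/\alpha))$, and $h_{g,v}(x)=\mathbf{1}[g(x)=v]$ is computed efficiently.

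I do not expect a real obstacle here; the one place that needs the right idea is the choice of random object --- a random balanced coloring of $X_d$ into about $2/\alpha$ classes, together with the decision to put the entire class $\{h_{\pi,v}\}_{v\in[k]}$ into $\HhH$ (so that $\HhH$ is independent of $j$) while analyzing only the single relevant color $v=\pi(j)$. Everything else is a one-line Markov/first-moment bound, the generic $\beta$-boosting of Claim~\ref{claim:boostB}, and a textbook hashing derandomization; the same construction with a pairwise-independent hash family on $\mathbb{N}$ also yields the $O_{\alpha,\beta,\epsilon}(1)$-size representation for $\point_{\N}$ promised in Example~\ref{example:pointRatio}.
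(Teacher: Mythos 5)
Your proof is correct, and it is essentially the paper's argument in a lightly repackaged form. The paper draws $M=\frac{4}{\alpha}\ln(1/\beta)$ i.i.d.\ random hypotheses, each including every point of $X_d$ independently with probability $\alpha/2$; a single such hypothesis is $\alpha$-good with probability at least $\frac{\alpha}{2}\cdot\frac{1}{2}$ (the factor $\alpha/2$ pays for the event $h(j)=1$, after which the same first-moment/Markov computation you perform applies), and amplification over the $M$ draws gives failure probability $\beta$. You instead draw only $\lceil\log_2(1/\beta)\rceil$ random colorings into $k\approx 2/\alpha$ buckets and include all $k$ bucket indicators, so the hypothesis for the bucket of $j$ is present by construction and each draw succeeds with probability at least $1/2$; the total number of hypotheses, and hence $\size(\HhH)$, is the same up to an additive constant (your ceilings and $\log_2$-versus-$\ln$ slack mean the stated constant $4$ is only matched approximately, which is immaterial since the paper treats such constants as arbitrary). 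What each version buys: yours avoids conditioning on $h(j)=1$ and makes the per-draw success probability constant at the cost of correlated hypotheses within a block, which is harmless because the analysis only ever inspects $h_{\pi,\pi(j)}$; the paper's i.i.d.\ version is marginally simpler to state. The derandomization step is identical in spirit to the paper's: it too replaces full independence by pairwise independence to get $O(d)$-bit descriptions and efficient evaluation, and its efficient implementation is in effect a hash-based bucket indicator much like your $h_{g,v}$; just make sure $k$ is chosen (e.g., a power of two) so that the collision probability of the pairwise-independent family is at most $\alpha/2$.
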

\begin{proof}
Consider the following set of hypothesis classes
$$\HhH=\left\{ \HHH \subseteq 2^{X_d} \; : \; |\HHH|\leq \frac{4}{\alpha} \ln(\frac{1}{\beta}) \right\}.$$
That is, $\HHH\in\HhH$ if $\HHH$ contains at most $\frac{4}{\alpha} \ln(\frac{1}{\beta})$ boolean functions. We will show an appropriate distribution $\PPP$ s.t. $(\HhH,\PPP)$ is an $(\alpha,\beta)$-probabilistic representation of the class $\point_d$.
To this end, fix a target concept $c_j \in \point_d$ and a distribution $\DDD$ on $X_d$ (remember that $j$ is the unique point on which $c_j(j)=1$). We need to show how to randomly choose an $\HHH\in_R \HhH$ such that with probability at least $(1-\beta)$ over the choice of $\HHH$, there will be at least one $h\in \HHH$ such that $\error_\DDD(c_j,h) \leq \alpha$. Consider the following process for randomly choosing an $\HHH\in\HhH$:

$$\boxed{
\begin{array}{l}
\text{1. Denote } M=\frac{4}{\alpha} \ln(\frac{1}{\beta})\\
\text{2. For } i=1, \ldots, M \text{ construct hypothesis } h_i \text{ as follows:}\\
\hspace{10 mm} \text{For each } x\in X_d \text{ (independently):}\\
\hspace{20 mm} \text{Let } h_i(x)=1 \text{ with probability } \alpha/2,\\
\hspace{19 mm} \text{ and } h_i(x)=0 \text{  otherwise. }\\
\text{3. Return } \HHH=\{h_1,h_2,\ldots,h_M \}.
\end{array}}$$

The above process induces a distribution on $\HhH$, denoted as $\PPP$.
We will next analyze the probability that the returned $\HHH$ does not contain an $\alpha$-good hypothesis. We start by fixing some $i$ and analyzing the expected error of $h_i$, conditioned on the event that $h_i(j)=1$. The probability is taken over the random coins used to construct $h_i$.
\begin{eqnarray*}
&&\E_{h_i}\left[\error_\DDD(c_j,h_i) \; \Big| \; h_i(j)=1\right]=\\
&&=\E_{h_i}\left[ \E_{x\in\DDD}\left[ \, \big|c_j(x)-h_i(x)\big| \, \right] \;\Big|\; h_i(j)=1\right]\\
&&=\E_{x\in\DDD}\left[ \E_{h_i}\left[ \, \big|c_j(x)-h_i(x)\big| \;\Big|\; h_i(j)=1 \right] \right] \leq \frac{\alpha}{2}.
\end{eqnarray*}
Using Markov's Inequality,
$$\Pr_{h_i}\left[\error_\DDD(c_j,h_i) \geq \alpha \; \bigg| \; h_i(j)=1\right] \leq \frac{1}{2}.$$
So, the probability that $h_i$ is $\alpha$-good for $c_j$ and $\DDD$ is:
\begin{eqnarray*}
&&\Pr_{h_i}\left[\error_\DDD(c_j,h_i)\leq\alpha\right]\geq\\
&&\geq \Pr_{h_i}\left[h_i(j)=1\right] \cdot \Pr_{h_i}\left[\error_\DDD(c_j,h_i)\leq\alpha \;\bigg|\; h_i(j)=1\right]\\
&&\geq \frac{\alpha}{2} \cdot \frac{1}{2}=\frac{\alpha}{4}.
\end{eqnarray*}
Thus, the probability that $\HHH$ fails to contain an $\alpha$-good hypothesis is at most
$\left(1-\frac{\alpha}{4}\right)^M$, which is less than $\beta$ for our choice of $M$. This concludes the proof that $(\HhH,\PPP)$ is an $(\alpha,\beta)$-probabilistic representation for $\point_d$.

When a hypothesis $h_i()$ was constructed in the above random process, the value of $h_i(x)$ was independently drawn for every $x\in X_d$. This results in a hypothesis whose description size is $O(2^d)$, which in turn, will result in a non efficient learning algorithm.  We next construct hypotheses whose description is short. To achieve this goal, we note that in the above analysis we only care about the probability that $h_i(x)=0$ given that $h_i(j)=1$. Thus, we can choose the values of $h_i$ in a pairwise independent way, e.g., using a random polynomial of degree 2. The size of the description in this case is $O(d)$.
\remove{We can represent (and efficiently sample) our hypotheses using only two elements of the finite field $\F_p$ for some prime $p>\max\{\frac{2}{\alpha},2^d \}$. That is, we can replace the distribution $\PPP$ with the distribution $\widehat{\PPP}$ induced by the following random process:
$$\boxed{
\begin{array}{ll}
1.&\text{Denote } M=\frac{4}{\alpha} \ln(\frac{1}{\beta})\\
2.&\text{Denote } p>\max\{\frac{2}{\alpha},2^d \} \text{ an arbitrary prime.}\\
3.& \text{For } i=1, \ldots, M \text{ build a hypothesis } h_i \text{ as follows:}\\
&\hspace{5 mm} \text{Randomly draw } a_i,b_i\in_R\F_p   \text{ (independently).}\\
&\hspace{5 mm} h_i(x)=\begin{cases}
1, & a_i x + b_i < \frac{\alpha\cdot p}{2} \text{ (mod p)}\\
0, & \text{otherwise}
\end{cases} \\
4.&\text{Return } \HHH=\{h_1,h_2,\ldots,h_M \}.
\end{array}}$$  }
\end{proof}

\begin{observation}
Consider the class $\point_{\N}$, defined in Example \ref{example:pointRatio}. The above construction can be adjusted to yield an (inefficient) improper private learner for $\point_{\N}$ with $O_{\alpha,\beta,\epsilon}(1)$ samples.
The only adjustments necessary are in the construction of the $(\alpha,\beta)$-probabilistic representation. Specifically, we need to specify how to randomly draw a boolean function $h$ over the natural numbers, such that for every $x\in\N$ the probability of $h(x)=1$ is $\alpha/2$, and the values of $h$ on every two distinct points in $\N$ are independent. This can be done easily, as a random real number could be interpreted as a random function over $\N$.
\end{observation}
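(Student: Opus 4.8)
The plan is to re-run the construction of \claimref{pointEff} verbatim with the finite domain $X_d$ replaced by $\N$, and then feed the resulting representation into \lemref{equivalence2}. The only step that is not an immediate transcription is the sampling of a random boolean function on the infinite domain: we need a $\{0,1\}$-valued function $h$ on $\N$ such that $\Pr[h(x)=1]=\alpha/2$ for every $x$ and such that the values on distinct points are independent. Such an object exists measure-theoretically — one may take the product measure on $\{0,1\}^{\N}$ with marginals $\mathrm{Ber}(\alpha/2)$, or, concretely, draw a uniform real $\rho\in[0,1]$, split its binary digits along a bijection $\N\to\N\times\N$ into independent sub-reals $\rho_1,\rho_2,\ldots\in[0,1]$, and set $h(n)=1$ iff $\rho_n<\alpha/2$. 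This yields mutual (hence pairwise) independence of the coordinates, which is all the analysis uses.

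Granting this, fix $M=\frac{4}{\alpha}\ln(1/\beta)$, let $\HHH=\{h_1,\ldots,h_M\}$ consist of $M$ independent copies of the above random function, let $\HhH$ be the family of all sets of at most $M$ boolean functions over $\N$, and let $\PPP$ be the distribution on $\HhH$ induced by this sampling. The computation of \claimref{pointEff} carries over unchanged: for a target $c_j\in\point_{\N}$ and any distribution $\DDD$ on $\N$, conditioning on $h_i(j)=1$ gives $\E[\error_\DDD(c_j,h_i)\mid h_i(j)=1]\le\alpha/2$ (the only fact used is $\Pr[h_i(x)=1\mid h_i(j)=1]=\alpha/2$ for $x\neq j$, which is pairwise independence), so by Markov $\Pr[\error_\DDD(c_j,h_i)\le\alpha]\ge\frac{\alpha}{2}\cdot\frac{1}{2}=\frac{\alpha}{4}$, and therefore $\HHH$ misses an $\alpha$-good hypothesis for $c_j,\DDD$ with probability at most $(1-\frac{\alpha}{4})^M<\beta$. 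Hence $(\HhH,\PPP)$ is an $(\alpha,\beta)$-probabilistic representation for $\point_{\N}$ with $\size(\HhH)=\ln M=\ln\!\big(\tfrac{4}{\alpha}\ln(1/\beta)\big)=O_{\alpha,\beta}(1)$; the fact that $|\HhH|$ is infinite is irrelevant, since the sample complexity in \lemref{equivalence2} depends only on $\size(\HhH)$.

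Finally, apply \lemref{equivalence2} to the representation built with parameters $(\alpha/6,\beta/4)$: its size and the additive $\ln(4/\beta)$ term are both $O_{\alpha,\beta}(1)$, so the resulting algorithm $(\alpha,\beta,\epsilon)$-PPAC learns $\point_{\N}$ with sample size $m=O\!\big(\tfrac{1}{\alpha\epsilon}(\size(\HhH)+\ln(1/\beta))\big)=O_{\alpha,\beta,\epsilon}(1)$, as in Example~\ref{example:pointRatio}. It is improper because each output hypothesis is a random function not lying in $\point_{\N}$, and inefficient because such a function on $\N$ has no finite description; the pairwise-independent-polynomial trick that restored efficiency in \claimref{pointEff} relied on $|X_d|=2^d<\infty$ (it needs a prime $p>2^d$) and has no unconditional analogue over $\N$. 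The one genuine obstacle is the measure-theoretic existence of the random function on the infinite domain, but this is entirely standard (Kolmogorov extension / infinite product measures); once it is granted, the rest is a transcription of the finite-domain argument plus a single invocation of \lemref{equivalence2}.
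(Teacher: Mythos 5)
Your proposal is correct and follows the paper's own route: the paper's justification for this observation is precisely that the finite-domain construction of Claim~\ref{claim:pointEff} carries over verbatim once one can sample a Bernoulli$(\alpha/2)$ function on $\N$ with independent values (interpreting a random real as such a function), after which Lemma~\ref{lem:equivalence2} gives the $O_{\alpha,\beta,\epsilon}(1)$-sample improper private learner. Your measure-theoretic elaboration (product measure / digit splitting) and the remark on why efficiency is lost over $\N$ are just more explicit versions of what the paper asserts.
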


%
% The following two commands are all you need in the
% initial runs of your .tex file to
% produce the bibliography for the citations in your paper.
\bibliographystyle{abbrv}

% sigproc.bib is the name of the Bibliography in this case
% You must have a proper ".bib" file
%  and remember to run:
% latex bibtex latex latex
% to resolve all references
%
% ACM needs 'a single self-contained file'!
%
\end{document}